\begin{document}

\theoremstyle{plain}
\newtheorem{theorem}{Theorem}[section]
\newtheorem{main_theorem}[theorem]{Main Theorem}
\newtheorem{proposition}[theorem]{Proposition}
\newtheorem{corollary}[theorem]{Corollary}
\newtheorem{corollaire}[theorem]{Corollaire}
\newtheorem{lemma}[theorem]{Lemma}
\theoremstyle{definition}
\newtheorem{definition}[theorem]{Definition}
\newtheorem{Theorem-Definition}[theorem]{Theorem-Definition}
\theoremstyle{remark}
\newtheorem{remark}{Remark}
\newtheorem{example}{Example}

\sloppy

\newcommand{\Vol}{\operatorname{Vol}}
\newcommand{\RT}{Witten-Reshetikhin-Turaev }
\newcommand{\Homeo}{\operatorname{Homeo}}
\newcommand{\Hom}{\operatorname{Hom}}
\newcommand{\Opp}{\operatorname{Op}}
\newcommand{\GL}{\operatorname{GL}}
\newcommand{\PGL}{\operatorname{PGL}}
\newcommand{\End}{\operatorname{End}}
\newcommand{\Mod}{\operatorname{Mod}}
\newcommand{\U}{\operatorname{U}}
\newcommand{\quotient}[2]{{\raisebox{.2em}{$#1$}\left/\raisebox{-.2em}{$#2$}\right.}}
\newcommand{\sslash}{\mathbin{/\mkern-6mu/}}
\newcommand{\tr}{\operatorname{Tr}}
\newcommand{\PU}{\operatorname{PU}}
\newcommand{\ch}{\operatorname{ch}}
\newcommand{\emcg}{\widetilde{\Mod(\Sigma_g)}}
\newcommand{\mcg}{\Mod(\Sigma_g)}
\newcommand{\SL}{\operatorname{SL}}
\newcommand{\SU}{\operatorname{SU}}
\newcommand{\col}{\operatorname{col}}
\newcommand{\sgn}{\operatorname{sgn}}

\title[A quantum ergodic theorem]{A quantum ergodic theorem for mapping class groups action on character variety}

\author{Julien \textsc{Korinman}}
\address{ Institut Montpelli\'erain Alexander Grothendieck - UMR 5149 Universit\'e de Montpellier. Place Eug\'ene Bataillon, 34090 Montpellier France}
\email{julien.korinman@gmail.com}
\subjclass{$57$R$56$, $81$Q$50$, $57$M$60$}
\keywords{Quantum ergodicity, character variety, \RT representations, mapping class group}
\date{}
\maketitle


\begin{abstract} 
We state a theorem relating the ergodicity of the action of a given subgroup of the mapping class group of a surface on the character variety, to the asymptotic of its invariant subspaces through the \RT representations. As application we give an asymptotic result on the spin decomposition arising in TQFT.
\end{abstract}

\section{Introduction}

\par The purpose of this paper is to generalize a classical quantum ergodic theorem (\cite{Schnirelman, Zelditch, CdV, BdB,Zelditch94}) relating the ergodicity of the action of a symmetry group on a compact phase space to the asymptotic of the decomposition of its associated  group representations arising from quantization.
\vspace{2mm}
\par By a \textit{classical dynamical system} $(\mathcal{A}^{cl}, \tau^{cl}, G)$ we refer to:
\begin{enumerate}
\item a commutative $C^*$ algebra $\mathcal{A}^{cl}$, i.e. a commutative unital $\mathbb{C}$ algebra $\mathcal{A}^{cl}$ with an involutive antilinear, antimorphism of algebra $* : \mathcal{A}^{cl} \to \mathcal{A}^{cl}$ and a norm $\lVert \cdot \rVert$ such that $\lVert f^*f\rVert = \lVert f \rVert^2$; 
\item a state $\tau^{cl}: \mathcal{A}^{cl}\to \mathbb{C}$, i.e. a linear form such that $\tau^{cl}(f^* f)\geq 0$ and $\tau^{cl}(1)=1$; 
\item a group $G$ acting on $\mathcal{A}^{cl}$ by automorphisms of $C^*$-algebra (i.e. commuting with $*$ and preserving the norm) such that $\tau^{cl}$ is $G$ invariant (i.e. $\tau^{cl}(g\cdot f)=\tau^{cl}(f)$ for $g\in G$, $f\in \mathcal{A}^{cl}$).
\end{enumerate}

The main example to keep in mind is the case where $(\mathcal{M},\omega)$ is  a compact symplectic manifold  (phase space) together with a symmetry group $G$ acting on $\mathcal{M}$ by symplectomorphisms. In this case $\mathcal{A}^{cl}=C(M)$ is the algebra of continuous maps $f: \mathcal{M}\to \mathbb{C}$, $*$ is the complex conjugacy, $\lVert f \rVert= \max_{x\in \mathcal{M}} |f(x)|$ and 
$$ \tau^{cl}(f)= \frac{1}{\Vol(\mathcal{M})} \int_{\mathcal{M}} f dV, $$
where $dV$ is the Liouville measure. $G$ acts by $g\cdot f (x) = f(g^{-1}\cdot x)$. A second example is the case where $X$ is the compact real form of an algebraic complex variety $X^{\mathbb{C}}$ (possibly with singularities)  whose smooth locus admits a symplectic form and such that $G$ acts by symplectomorphism; in this case $\mathcal{A}^{cl}$ will be the algebra of regular functions of $X^{\mathbb{C}}$ and $*$ will be the involution defining the compact real form.  Let $L^2(\mathcal{A}^{cl}, \tau^{cl})$ be the Gelfand-Naimark-Segal (GNS) construction (i.e. the completion for $\lVert \cdot \rVert$ of the quotient of $\mathcal{A}^{cl}$ by the kernel of the pairing $(f,g)\mapsto \tau^{cl}(fg^*)$) on which $G$ acts by quotient and completion. The action of $G$ is said $\textit{ergodic}$ if the only $G$ invariant vectors are the scalars $\mathbb{C}\subset L^2(\mathcal{A}^{cl}, \tau^{cl})$. In the previous example, this is equivalent to saying that every $G$-invariant Borel subsets of $\mathcal{M}$ have measure $0$ or $1$  (see \cite{Sunder_VNAlgErgodicity} for details). 
\vspace{2mm}
\par A \textit{quantized dynamical system} $(\mathcal{A}_N, V_N, \Opp_N, \rho_N)_{N\geq N_0}$ consists of:
\begin{enumerate}
\item A family of (non-commutative) $C^*$ algebras $(\mathcal{A}_N)_N$ (quantum observables) thought as a non-commutative deformation of $\mathcal{A}^{cl}$   along a parameter $N$ which plays the role of the inverse of the reduced Planck constant (see Section $2$ for details on quantization).
\item A family of finite dimensional Hilbert spaces  $((V_N, \left<\cdot, \cdot\right>))_N$.
\item Some quantization maps
$$ \Opp_N : \mathcal{A}^{cl} \to \End (V_N) $$
which are morphisms of vector spaces (but not algebras morphisms) and satisfy the positivity condition:
\
$$\Opp_N(f^* f)\geq 0, \mbox{ for all }f\in \mathcal{A}^{cl},$$
where $\Opp_N(f)\geq 0$ means that the operator has non-negative eigenvalues. This condition will be automatically be satisfied if $\Opp_N(f^*)=\Opp_N(f)^*$, i.e. when $\Opp_N(f^*)$ is the dual of $\Opp_N(f)$.

\item A family of unitary representations $(\rho_N)_N$ of a central extension $\hat{G}$ of $G$
$$ \rho_N : \hat{G} \rightarrow \U (V_N),$$
which is related to the quantization through the following asymptotic Egorov identity:
$$ \begin{array}{ll} \left\lVert \rho_N(\phi)^{-1}\Opp_N(f)\rho_N(\phi) -\Opp_N(f\circ \phi)\right\rVert  \xrightarrow[N\to \infty]{} 0 &, \forall f\in \mathcal{A}^{cl} ,\forall \phi \in \hat{G}. \end{array} $$
If moreover the quantization satisfies the equality $ \rho_N(\phi)^{-1}\Opp_N(f)\rho_N(\phi) = \Opp_N(f\circ \phi)$ for all $f$ and $\phi$, we will say that the quantization satisfies the \textit{exact Egorov identity}. 
\end{enumerate}

\vspace{2mm}
\par Given a subspace $W\subset V_N$, one can associate a state $\tau_W$ on $\mathcal{A}^{cl}$ through the formula:
$$ \tau_W(f) := \frac{1}{\dim (W)} \tr \left( \Pi_W \Opp_N (f)\right), $$
 where $\Pi_W$ denotes the orthogonal projection on $W\subset V_N$. When $\mathcal{A}^{cl}=C^{\infty}(\mathcal{M})$, this gives 
 a probability measure $\mu_W$ on $\mathcal{M}$ (equipped with its Borelian $\sigma$-algebra given by $dV$) through the formula $ \int_{\mathcal{M}} f d\mu_W= \tau_W(f)$.
\vspace{2mm}
\par We now state the main theorem of the paper. Suppose that one has a decomposition:
$$ V_N = W_{1,N}\oplus \ldots \oplus W_{p_N, N} $$
 of $V_N$ into $G$-invariant subspaces.
 \vspace{2mm}
 \begin{theorem}\label{main_th}
 Assume that:
 \begin{enumerate}
 \item The group $G$ acts ergodically.
 \item The "quantum average"  of observables converges to the "classical average", \textit{i.e.} the sequence $(\tau_{V_N})_N$ converges, in the $*$-weak topology, to the classical state $\tau^{cl}$. In other words, we ask that for all observables $f\in\mathcal{A}^{cl}$ one has:
 $$\frac{1}{\dim (V_N)} \tr \left(  \Opp_N (f)\right) \xrightarrow[N\to \infty]{} \tau^{cl}(f).$$
 
 \end{enumerate}
 \par Then there exist sets $J_N \subset \{ 1, \ldots, p_N \}$ such that:
 \begin{enumerate}
 \item One has
 $$ \frac{1}{\dim (V_N)} \sum_{i\in J_N} \dim (W_{i,N}) \xrightarrow[N \to \infty]{} 1. $$
 \item For any sequence $j=(j_N)_N$ with $j_N\in J_N$, the sequence $(\tau_{W_{j_N}})_N$ converges in the $*$-weak topology to $\tau^{cl}$. This means that for any classical observable $f\in \mathcal{A}^{cl}$, one has
 $$ \frac{1}{\dim (W_{j_N, N})} \tr \left( \Pi_{W_{j_N,N}} \Opp_N(f) \right) \xrightarrow[N \to \infty]{}\tau^{cl}(f).$$
 \end{enumerate}
 \end{theorem}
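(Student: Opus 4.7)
The plan is to adapt the Schnirelman--Zelditch--Colin de Verdi\`ere variance argument to the setting of a discrete group acting by symplectomorphisms. For any $f \in \mathcal{O}_{\mathcal{M}}$ set $f_0 := f - \tau^{classical}(f)\cdot 1$. It suffices to establish the \emph{quantum variance estimate}
$$ V_N(f_0) \; := \; \frac{1}{\dim V_N} \sum_{i=1}^{p_N} \dim(W_{i,N}) \, \bigl|\tau_{W_{i,N}}(f_0)\bigr|^2 \; \xrightarrow[N \to \infty]{} \; 0 $$
for every such $f_0$; Chebyshev's inequality then controls the dimension-weighted proportion of ``bad'' indices for each observable, and a standard diagonal extraction over a countable dense family of observables in $\mathcal{O}_{\mathcal{M}}$ produces the sets $J_N$ fulfilling both conclusions.

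The core mechanism is that $G$-invariance of each $W_{i,N}$ combined with the Egorov identity yields, for any fixed $\phi \in \hat{G}$,
$$ \tau_{W_{i,N}}(f_0 \circ \phi) - \tau_{W_{i,N}}(f_0) = o_N(1), $$
uniformly in $i$, because $\Pi_{W_{i,N}}$ commutes with $\rho_N(\phi)$. Consequently, for any finite convex combination $\bar{f} = \sum_k \lambda_k (f_0 \circ \phi_k)$ with $\sum_k \lambda_k = 1$, one has $\tau_{W_{i,N}}(\bar{f}) = \tau_{W_{i,N}}(f_0) + o_N(1)$ uniformly in $i$. The Alaoglu--Birkhoff theorem, applied to the unitary representation of $G$ on $L^2(\mathcal{M}, dV)$, asserts that the $L^2$-projection onto $G$-invariants lies in the norm-closed convex hull of the orbit; by ergodicity this projection vanishes on $f_0$, so for any $\varepsilon > 0$ one can select $\bar{f}$ with $\|\bar{f}\|_{L^2(dV)} < \varepsilon$. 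Crucially, this step requires no amenability assumption on $G$.

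It remains to control $V_N(\bar{f})$ by $\|\bar{f}\|_{L^2}$. In the positive case, each $\mu_{W_{i,N}}$ is a genuine probability measure on $\mathcal{M}$, so the scalar Cauchy--Schwarz $|\tau_{W_{i,N}}(\bar{f})|^2 \leq \tau_{W_{i,N}}(|\bar{f}|^2)$ gives
$$ V_N(\bar{f}) \;\leq\; \frac{1}{\dim V_N}\tr\bigl(\Opp_N(|\bar{f}|^2)\bigr) \; \xrightarrow[N \to \infty]{} \; \tau^{classical}(|\bar{f}|^2) \leq \varepsilon^2, $$
by hypothesis~(2). In the Toeplitz case the same bound follows from the trace Cauchy--Schwarz $\sum_i \dim(W_{i,N})|\tau_{W_{i,N}}(\bar{f})|^2 \leq \tr(\Opp_N(\bar{f})^*\Opp_N(\bar{f}))$ combined with the semiclassical product formula $\Opp_N(\bar{f})^*\Opp_N(\bar{f}) - \Opp_N(|\bar{f}|^2) \to 0$ in operator norm. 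Letting $\varepsilon \to 0$ and using the uniform approximation $\tau_{W_{i,N}}(\bar{f}) \approx \tau_{W_{i,N}}(f_0)$ closes the variance estimate.

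The main obstacle is precisely this final step: connecting the classical $L^2$-norm of $\bar{f}$ to the quantum variance. The dichotomy in the definition of a quantized system is used here in an essential way---either the positivity of $\Opp_N$ (giving a state-level Cauchy--Schwarz) or a normalized semiclassical composition law in the Toeplitz setting. Once either variant is in hand, the Egorov averaging, the Alaoglu--Birkhoff selection of $\bar{f}$, and the Chebyshev/diagonal extraction of $J_N$ all fall into place.
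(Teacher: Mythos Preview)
Your argument is correct and takes a genuinely different route from the paper's proof. The paper does \emph{not} run a variance estimate at all; instead it observes that in the positive case each $\tau_{W_{i,N}}$ is a state, that $\tau_{V_N}$ is the barycenter of the $\tau_{W_{i,N}}$ with weights $\alpha_{i,N}=\dim(W_{i,N})/\dim(V_N)$, and that ergodicity makes $\tau^{classical}$ an extremal point of the convex compact of $G$-invariant states. The whole theorem is then reduced to an elementary convex-geometry fact (Proposition~\ref{prop_geom} and Lemma~\ref{lemma_geom}): if a sequence of barycenters in a compact convex set converges to an extremal point, then the weighted mass of points outside any fixed neighbourhood of that extremal point tends to zero; the proof is a one-line separating-hyperplane estimate. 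The Toeplitz case is handled by comparing $\Opp_N(f)$ to the manifestly positive $\Opp_N^+(f)=P_N\circ M_f$ in operator norm. In particular, the paper never averages over $G$, never invokes Alaoglu--Birkhoff, and never writes down a variance.

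What each approach buys: the paper's argument is more elementary and purely geometric---it needs no $L^2$ theory on $\mathcal{M}$, no composition calculus for Toeplitz operators, and no ergodic theorem of any kind, only the characterisation of ergodicity as extremality. Your approach stays closer to the classical Schnirelman--Zelditch--Colin de Verdi\`ere template and yields an honest quantitative variance bound $V_N(f_0)\to 0$, which can be useful if one later wants rates; the key insight that the Alaoglu--Birkhoff mean ergodic theorem (valid for \emph{any} group of unitaries, by the strict-convexity minimiser argument) replaces the amenable Birkhoff/F{\o}lner averaging is a nice observation that the paper does not make. Both strategies sidestep the non-amenability of $\Mod(\Sigma)$, but by different mechanisms: the paper via extremality of ergodic measures, you via Alaoglu--Birkhoff.
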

 \vspace{2mm}
 \par The conclusion of this theorem should be understood as: " almost every sequence of states $(\tau_{W_{j_N}})_N$ converges  to $\tau^{cl}$".
 \vspace{2mm}
 \par This theorem generalizes previous work (\cite{Schnirelman, Zelditch, CdV, BdB,Zelditch94}) where $G$ was either abelian or amenable. The previous proofs made use of the Birkhoff theorem which only holds for  restricted class of groups (see the introduction of \cite{PollicotSharp} and references therein for a modern discussion on generalizations of the Birkhoff theorem) but does not hold for the more general groups we have in mind, that is the mapping class group of surfaces. Our proof is more elementary and makes use of the fact that the state $\tau_{V_N}$ is the barycenter of the states  $(\tau_{W_{i,N}})_i$ with weights $\alpha_{i,N}:= \frac{\dim (W_{i,N})}{\dim (V_N)}$. The ergodicity of the action of $G$ is equivalent to the fact that the state $\tau^{cl}$ is extremal in the convex compact set $\Delta$ of $G$-invariant states. Theorem \ref{main_th}  will result from the elementary Proposition \ref{prop_geom} which states that if a sequence of finite sets of points in a convex compact metric vector space have barycenters converging to an extremal point, then "almost all" subsequences of its elements converge to the extremal point. Figure $\ref{fig_ergodicity}$ illustrates this proof. As pointed to us by S.Nonnemacher, a similar geometric interpretation already appeared in \cite{Zelditch94}, though it did not lead the author to a geometric proof.  
   
\begin{figure}[!h] 
\centerline{\includegraphics[width=7cm]{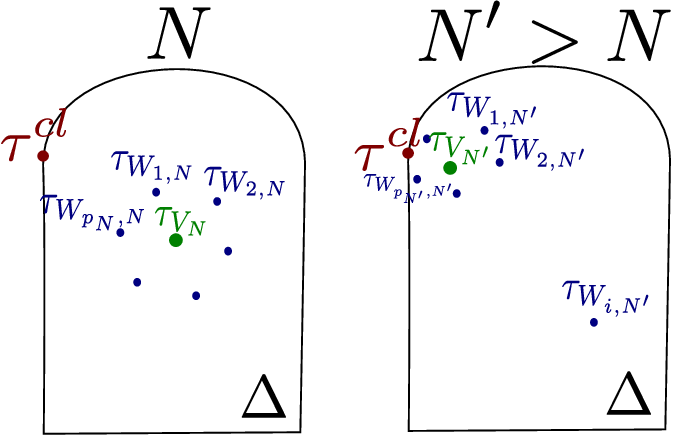} }
\caption{Illustration of Proposition \ref{prop_geom}. A sequence of finite sets of points of a compact convex subset of a vector metric space have barycenters converging to an extremal point. It results that "almost every" subsequences of points converge to the extremal point.} 
\label{fig_ergodicity} 
\end{figure} 

\vspace{2mm}
\par A new feature that does not appear in previous versions of the theorem is that we deal now with invariant spaces of arbitrary dimensions and not just one-dimensional ones. When the dimension of the invariant subspaces are not negligible compare to the dimension of the whole space, we get the following straightforward consequence of Theorem \ref{main_th}:
\begin{corollary}\label{coro_main_th}
 Under the assumptions of Theorem $\ref{main_th}$, if $(j_N)_N$ is an exceptional sequence such that $(\tau_{j_N,N})_N$ does not converge towards $\tau^{cl} $ (in the $*$-weak topology), then one has
 $$ \frac{\dim (W_{j_N,N})}{\dim (V_N)} \xrightarrow[N\to \infty]{} 0. $$ 
 \end{corollary}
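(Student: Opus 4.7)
The corollary should follow from Theorem \ref{main_th} by a contraposition argument, and the plan is to exploit the fact that part (1) of the theorem is equivalent to
\[
\sum_{i \notin J_N} \frac{\dim(W_{i,N})}{\dim(V_N)} \xrightarrow[N \to \infty]{} 0.
\]
This immediately implies that any sequence of indices $(i_N)$ lying outside $J_N$ has individual weight $\dim(W_{i_N,N})/\dim(V_N) \to 0$, since each such term is dominated by the vanishing sum.

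The core of the argument is the implication: if a subsequence satisfies $\dim(W_{j_N,N})/\dim(V_N) \geq \delta > 0$, then along that subsequence $\tau_{W_{j_N,N}} \to \tau^{classical}$. To see this, suppose for contradiction that $\alpha_N := \dim(W_{j_N,N})/\dim(V_N)$ does not tend to $0$ and extract a subsequence $(N_k)$ with $\alpha_{N_k} \geq \delta$. The estimate above forces $j_{N_k} \in J_{N_k}$ for all sufficiently large $k$, for otherwise the single term $\alpha_{N_k} \geq \delta$ would be dominated by a quantity tending to $0$. One then completes $(j_{N_k})$ into a full sequence $(j'_N)_{N}$ with $j'_N \in J_N$ for all large $N$ (choose any element of $J_N$ outside the subsequence, which exists for $N$ large since $J_N$ is eventually nonempty by part (1)). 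Applying part (2) of Theorem \ref{main_th} to $(j'_N)$ yields $\tau_{W_{j'_N,N}} \to \tau^{classical}$ in the $*$-weak topology, and restricting to $(N_k)$ gives $\tau_{W_{j_{N_k},N_k}} \to \tau^{classical}$.

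This last conclusion contradicts the exceptional hypothesis: by definition of an exceptional sequence, $(\tau_{W_{j_N,N}})_N$ does not converge to $\tau^{classical}$, so on every subsequence the limit (if any) differs from $\tau^{classical}$, forcing $\alpha_N \to 0$ on the full sequence as claimed.

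The main subtlety will be to make the passage from subsequence statements to the full-sequence conclusion $\alpha_N \to 0$ airtight: the extraction procedure shows that on every subsequence where $\alpha_N$ stays bounded below, the states must converge to $\tau^{classical}$, so the non-convergence hypothesis forces the full sequence $\alpha_N$ to tend to $0$. This is the only place where the informal phrasing of "exceptional sequence" has to be made precise, and it is where I would focus the rigor of the write-up.
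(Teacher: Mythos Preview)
The paper gives no detailed proof here; it only says the corollary is ``an easy consequence'' of Theorem~\ref{main_th} (via Proposition~\ref{prop_geom}). Your contraposition argument using
\[
\sum_{i\notin J_N}\frac{\dim(W_{i,N})}{\dim(V_N)}\xrightarrow[N\to\infty]{}0
\]
is exactly the intended one, and the first two paragraphs are correct: any subsequence $(N_k)$ along which $\alpha_{N_k}\ge\delta>0$ must eventually satisfy $j_{N_k}\in J_{N_k}$, and then part~(2) of Theorem~\ref{main_th} yields $\tau_{W_{j_{N_k},N_k}}\to\tau^{classical}$ along that subsequence.

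The gap is in your last paragraph, and it is a genuine error, not merely a point to tidy up. You write that non-convergence of $(\tau_{W_{j_N,N}})_N$ to $\tau^{classical}$ means ``on every subsequence the limit (if any) differs from $\tau^{classical}$''. This is false: a non-convergent sequence can very well have subsequences converging to the target (think of $((-1)^n)$ and the limit $1$). Your argument shows that every subsequence on which $\alpha_N$ stays bounded below has states converging to $\tau^{classical}$; the exceptional hypothesis only furnishes \emph{some} subsequence on which the states stay away from $\tau^{classical}$. These two subsequences may be disjoint, so no contradiction is produced and the implication $\alpha_N\to 0$ on the \emph{full} sequence does not follow.

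Concretely, at the level of Proposition~\ref{prop_geom} one can take $\Delta=[0,1]$, $\tau=0$, and for even $N$ set $\tau_{1,N}=0$, $\alpha_{1,N}=\tfrac12$, while for odd $N$ set $\tau_{1,N}=1$, $\alpha_{1,N}=\tfrac1N$ (completing to barycenters tending to $0$). Then $(\tau_{1,N})$ fails to converge to $\tau$ yet $\alpha_{1,N}\not\to 0$. What your argument \emph{does} prove cleanly is the slightly weaker statement: along any subsequence on which $d(\tau_{W_{j_N,N}},\tau^{classical})\ge\varepsilon$, one has $\alpha_N\to 0$; equivalently, $\liminf_N\alpha_N>0$ forces $\tau_{W_{j_N,N}}\to\tau^{classical}$. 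This is what is actually used in the application to the spin decomposition, where $\alpha_N\to 2^{-2g}>0$, and is the content one should retain.
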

 \vspace{2mm}
\begin{remark}  Still it might happen that a few sequences $(\tau_{i_p,p})_p$ do not converge to $\tau^{cl}$ if the associate dimensions are not too large, as illustrated in Figure \ref{fig_ergodicity}. In the case of the quantization of the two-dimensional torus with action given by an Anosov element, such exceptional sequences might converge to measures which are barycenters of extremal points and Dirac measures localized on periodic orbits and are referred as    \textit{Scars} in \cite{FNdB}. Moreover, in \cite{Kelmer}, Kelmer exhibited in the case of higher dimensional tori exceptional sequences converging to measures supported by invariant sub-tori named \textit{super Scars}.
\end{remark}
\vspace{3mm}
\par As main application of our theorem, we apply it to the case familiar to quantum topologists where
$$ \mathcal{M}(\Sigma) := {\Hom\left( \pi_1(\Sigma), SU(2)\right)}\sslash {SU(2)}\quad$$
 is the $\mathrm{SU}(2)$ character variety of a closed oriented connected surface $\Sigma$ equipped with the Atiyah-Bott symplectic form $\omega_{AB}$ (see Section $4$). The mapping class group $\Mod(\Sigma)$ and its first Johnson sub-group $\mathcal{K}(\Sigma)$ naturally act by symplectomorphisms on $(\mathcal{M}(\Sigma), \omega_{AB})$. The classical system $(\mathcal{M}, \omega_{AB}, \Mod(\Sigma))$ admits quantization $\left( \mathcal{S}_{A_N}(\Sigma), V_N(\Sigma), \Opp_N, \rho_N \right)_{N\geq 3}$ first defined heuristically by Witten in \cite{Wi2} and more rigorously by Reshetikhin and Turaev in \cite{RT}. In Section $4$ we briefly review their construction following the skein approach of \cite{Li2, BHMV2}.
 \vspace{2mm}
\par  In \cite{Goldman1}, Goldman showed that the action of $\Mod(\Sigma)$ is ergodic. In \cite{FuMa}, Funar and March\'e showed that the action of the first Johnson subgroup $\mathcal{K}(\Sigma)$ is also ergodic. In \cite{BHMV2}, non trivial invariant subspaces for both $\Mod(\Sigma)$ and $\mathcal{K}(\Sigma)$ where found when $4$ divides $N$. 
\par   Denote by $\mathcal{H}$ the group:
$$ \mathcal{H} := \left\{ \begin{array}{ll}
\mathrm{H}_1 \left( \Sigma_g; \mathbb{Z}/2\mathbb{Z}\right) &\mbox{, if }N \equiv 4 \pmod{8}; \\ 
\mathrm{H}_1 \left( \Sigma_g; \mathbb{Z}/2\mathbb{Z}\right)\rtimes \mathbb{Z}/2\mathbb{Z} &\mbox{, if }N \equiv 0 \pmod{8}. 
\end{array} \right.$$
 where the semi-direct product is the only non trivial one. In \cite{BHMV2}, the authors defined a non trivial decomposition:
 $$ V_N(\Sigma) = \bigoplus_{\chi\in \Hom\left(\mathcal{H},\mathbb{Z}/2\mathbb{Z}\right)} V_N\left(\Sigma,\chi\right),$$
where each $V_N\left(\Sigma,\chi\right)$ is invariant under $\mathcal{K}(\Sigma)$ and $V_N\left(\Sigma,\chi=0 \right)$ is invariant under $\Mod(\Sigma)$. We deduce from Corollary $\ref{coro_main_th}$  the following:
\begin{corollary}\label{coro_spin} {}
\par Let $(N_k)_k$ be an increasing sequence of non-negative integers, all of which being congruent to either $4$ or $0$ modulo $8$, and let $\chi \in \Hom\left(\mathcal{H},\mathbb{Z}/2\mathbb{Z}\right)$. Denote by $\Pi_{\chi, N}$ the orthogonal projector on $V_N\left(\Sigma,\chi\right)$. Then, for all $f\in \mathcal{O}_{\mathcal{M}(\Sigma)}$, one has
$$ \frac{1}{\dim (V_{N_k}\left(\Sigma,\chi\right))} \tr \left( \Opp_{N_k}(f)\Pi_{\chi,N_k}  \right) \xrightarrow[k \to \infty]{} \frac{1}{\Vol (\mathcal{M}(\Sigma))} \int_{\mathcal{M}(\Sigma)} f dV.$$
\end{corollary}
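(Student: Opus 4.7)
The plan is to apply Corollary~\ref{coro_main_th} to the classical system $(\mathcal{M}(\Sigma),\omega_{AB},\mathcal{K}(\Sigma))$ together with the Reshetikhin--Turaev quantization restricted to the sub-sequence of levels $(N_k)_k$, using the $\mathcal{K}(\Sigma)$-invariant decomposition $V_N(\Sigma) = \bigoplus_{\chi} V_N(\Sigma,\chi)$ furnished by \cite{BHMV2}. The idea is that, since each $\chi$ determines one piece of this decomposition at every level, we get for each $\chi$ a sequence of invariant subspaces $(V_{N_k}(\Sigma,\chi))_k$; the corollary will then produce the desired $*$-weak convergence provided we can rule out the sequence from being \emph{exceptional} in the sense of Corollary~\ref{coro_main_th}.

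I would first verify the two hypotheses of Theorem~\ref{main_th}. Ergodicity of the $\mathcal{K}(\Sigma)$-action on $(\mathcal{M}(\Sigma),\omega_{AB})$ is precisely the theorem of Funar--March\'e \cite{FuMa}. For the quantum-to-classical convergence of the global average
$$\frac{1}{\dim V_N(\Sigma)}\tr\bigl(\Opp_N(f)\bigr) \xrightarrow[N\to \infty]{} \frac{1}{\Vol\mathcal{M}(\Sigma)}\int_{\mathcal{M}(\Sigma)} f\, dV,$$
I would invoke the result of \cite{Detcherry15} that $\Opp_N(f)$ is a Toeplitz operator of principal symbol $f$, together with the standard trace asymptotics for Toeplitz operators which identify the leading term of $\tr \Opp_N(f)$ with $\dim V_N(\Sigma) \cdot \Vol(\mathcal{M})^{-1}\int f\,dV$ up to lower-order corrections in $1/N$.

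Granting these hypotheses, Corollary~\ref{coro_main_th} says that any exceptional sequence of summands $(W_{j_k,N_k})_k$ must satisfy $\dim W_{j_k,N_k}/\dim V_{N_k}(\Sigma) \to 0$. Fix $\chi\in\Hom(\mathcal{H},\mathbb{Z}/2\mathbb{Z})$. It therefore suffices to show that $\liminf_k \dim V_{N_k}(\Sigma,\chi)/\dim V_{N_k}(\Sigma) > 0$, for then the sequence of subspaces associated to $\chi$ cannot be exceptional and must satisfy the conclusion of Theorem~\ref{main_th}. The main obstacle is precisely this dimension estimate: the decomposition is an isotypic decomposition for the action of the fixed finite group $\mathcal{H}$, so in the limit each summand should receive a non-negligible proportion $\frac{1}{|\Hom(\mathcal{H},\mathbb{Z}/2\mathbb{Z})|}$ of $\dim V_N(\Sigma)$, but rigorously obtaining this will require either an explicit dimension computation extracted from the construction in \cite{BHMV2}, or a representation-theoretic argument exploiting the permutation action of $\mathcal{H}$ on the indexing set of characters to show that the dimensions of the various $V_N(\Sigma,\chi)$ are comparable up to a multiplicative constant independent of $N$. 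A secondary technical point is that the global trace asymptotic must be adapted to the (singular) character variety, which should follow by restricting to the smooth locus or directly from the estimates of \cite{Detcherry15}.
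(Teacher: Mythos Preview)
Your proposal is correct and follows essentially the same route as the paper: apply Corollary~\ref{coro_main_th} to the $\mathcal{K}(\Sigma)$-invariant spin decomposition, verify the two hypotheses of Theorem~\ref{main_th} (Funar--March\'e for ergodicity, and the trace asymptotic), and then rule out exceptionality by a dimension estimate. The paper carries out the dimension step exactly as you suggest, via the explicit formulas of \cite{BHMV2} (Theorems~7.10 and~7.16), which give $\dim V_{4r}(\Sigma_g,\chi)\sim 2^{-2g}\dim V_{4r}(\Sigma_g)$ for every $\chi$; this is packaged as Lemma~\ref{lemma_dim}. One small difference: for the global trace asymptotic the paper does not invoke general Toeplitz trace estimates but rather quotes directly the theorem of March\'e--Narimannejad \cite{MN}, which already handles the singular character variety and thus absorbs your ``secondary technical point''.
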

\vspace{2mm}
\par The original Schnirelman theorem, proved in \cite{Schnirelman, Zelditch, CdV}, does not immediately follow from Theorem \ref{main_th} because the Hilbert space considered is infinite dimensional. However it easily follows from Proposition \ref{prop_geom} as follows.
\vspace{2mm} \par
 In this case the classical system is $\left( S^* M, \omega, \mathbb{R} \right)$ where $(M,g)$ is a compact Riemannian hyperbolic manifold geodesically complete, $S^*M$ denotes the unitary cotangent bundle, $\omega$ the canonical symplectic form, the action of $\mathbb{R}$ is the geodesic flow and the algebra of classical observables is $C^{\infty}(S^*M)$. 
\vspace{2mm} \par 
The associated quantum system is $(\mathcal{H}, {\Psi}^0 (M), \Opp^F, U)$, where $\mathcal{H}=L^2(M)$ and ${\Psi}^0(M)$ denotes the algebra of order $0$ pseudo-differential operators on $M$.  The map $\Opp^F : C^{\infty}(M) \rightarrow {\Psi}^0(M)$ is Friedrich's quantization map (see \cite{CdV}) which satisfies the positivity condition $f\geq 0 \implies \Opp^F(f)\geq 0$ and whose inverse is the principal symbol map. The unitary representation of $\mathbb{R}$ on $\mathcal{H}$ is given by: 
$$ U^t \psi := \exp\left(i \sqrt{\Delta_M}t\right)\psi, \quad t\in \mathbb{R}.$$
\par This quantization satisfies the  asymptotic Egorov identity.
\vspace{2mm}
\par When $(M,g)$ is hyperbolic and geodesically complete, the geodesic flow induces an ergodic action of $\mathbb{R}$ on $S^*(M)$ hence the first condition of Theorem \ref{main_th} is satisfied. The second condition takes the following form. Let $(\psi_k)_k$ be a sequence of norm one eigenvectors of the Laplacian, that is $\Delta_M \psi_k = \lambda_k \psi_k$, indexed such that $(\lambda_k)$ is an increasing sequence of eigenvalues. Let $\tau_k$ denotes the associated state, that is
$$ \tau_k(f):= \left< \psi_k, \Opp^F(f)\psi_k\right>$$
and denotes by $\tau^{cl}$ the state associated to the Liouville measure of $\omega$. For $N\geq 0$ define the barycenter state
$$ B_N:= \frac{1}{\# \{ k, \lambda_k\leq N\} } \sum_{\lambda_k\leq N} \tau_k.$$
\par An application of Kamarata's Tauberian theorem (see \cite[Paragraph $4$]{CdV}) gives
$$ B_N \xrightarrow[N \to \infty]{}  \tau^{cl}$$ 
which is the analogue of the second condition of Theorem \ref{main_th}. Now Proposition \ref{prop_geom} implies that almost every sub-sequences $(\tau_{N_p})_p$ converges to $\tau^{cl}$:  this is the classical  Schnirelman theorem.

\vspace{3mm}
\par The paper is organized as follows. In Section $2$ we review the notion of quantization and detail the familiar example 
of the Schr\"odinger quantization of the two-dimensional torus which gives rise to the Weil representation of $\mathrm{SL}_2(\mathbb{Z})$. The image of Anosov elements are usually referred as "Arnold's quantum cats maps" (\cite{HB}) and are the object of study of a previous version of our theorem in \cite{BdB}. Section $3$ is devoted to the proof of Theorem $\ref{main_th}$. In Section $4$ we briefly review the quantization of the $\mathrm{SU}(2)$ character variety and prove Corollary $\ref{coro_spin}$.

\vspace{4mm}
 $\textbf{Acknowledgements:}$ The author is thankful to L.Charles,  L.Funar, J.March\'e, S.Nonnenmacher and F.Paulin for useful discussions. He also warmly thanks  L.Benard, P.Roche and N.Rougerie for useful comments which improved the clarity of the paper. He eventually thanks C.Oliveira, A.E.Presotto, F.Ruffino, D.Vendruscolo and the mathematic department of UFSCar for their kind hospitality during the redaction of this paper. He acknowledges  support from the grant ANR $2011$ BS $01 020 01$ ModGroup, CAPES, the GDR Tresses, the GDR Platon,  the GEAR Network and the  European Research Council (ERC DerSympApp) under the European Union’s Horizon 2020 research and innovation program (Grant Agreement No. 768679).

\section{Quantization of classical systems}

\subsection{Quantum system}
\par To motivate the physical meaning of this paper, we provide a recipe to construct quantizations of a classical system $(\mathcal{A}^{cl}, \tau^{cl}, G)$. Suppose that $\mathcal{A}^{cl}$ has a Poisson bracket $\{\cdot, \cdot\}$ (the one given by the symplectic structure when $\mathcal{A}^{cl}=C(\mathcal{M})$).
\vspace{2mm}
\par Let $\mathbb{C}[[\hbar]]$ denote the field of formal series in some parameter $\hbar$ referred as the reduced Planck constant. Set
 $\mathcal{A}^{cl} [[\hbar ]]:= \mathcal{A}^{cl}\otimes_{\mathbb{C}}\mathbb{C}[[\hbar]] $ 
 seen as a $\mathbb{C}[[\hbar]]$ flat module. We thus consider formal series of functions. A star-product on $\mathcal{A}^{cl} [[\hbar ]]$ is an associative product $\star$ such that if $f$ and $g$ are elements of $\mathcal{A}^{cl} [[\hbar ]]$ with expansion
$$ \begin{array}{ll} f=f_0 + \hbar f_1 + \ldots &, g=g_0+\hbar g_1+\ldots, \end{array}$$
then one has 
$$ f\star g = f_0g_0 + \circ(\hbar)$$
and
$$ f\star g-g\star f = \frac{\hbar}{i} \left\{ f_0, g_0 \right\} + \circ(\hbar^2)$$
 where $\left\{ \cdot, \cdot \right\}$ stands for the Poisson bracket. We refer to (\cite{KontsevichQuantizationPoisson}, \cite[II.2]{GRS_QuantizationDeformation}) for more details on quantization deformation. 
The algebra $\mathcal{A}^{cl} [[\hbar ]]$ with product $\star$ is thus a non-commutative deformation of the algebra of regular functions whose first order expansion is given by the symplectic structure of the phase space.  Consider the values $\hbar=\frac{1}{N}$ where $N$ denotes a positive integer and the complex  vector spaces
 $$\mathcal{A}_N:= \mathcal{A}^{cl} [[\hbar ]]\otimes_{\hbar=\frac{1}{N}} \mathbb{C}.$$ 
 with product given by the star-product. Note that $\mathcal{A}^{cl}$ and $\mathcal{A}_N$ are canonically isomorphic as vector spaces.
 We make the strong assumption that the star-product induces a well defined product on $\mathcal{A}_N$. This assumption will be satisfied if there exists an algebra $\mathcal{A}_q$ flat over $\mathbb{C}[q, q^{-1}]$ such that $\mathcal{A}^{cl}=\mathcal{A}_q\otimes_{q=1}\mathbb{C}$ is obtained by replacing $q$ by $+1$ and $\mathcal{A}^{cl}_{\hbar}\cong \mathcal{A}_q\otimes_{q= \exp(\hbar)} \mathbb{C}[[\hbar]]$ is obtained by replacing $q$ by $\exp(2i\pi \hbar)$. In this case, $\mathcal{A}_N$ is the algebra obtained by replacing $q$ by the root of unity $\exp(2i\pi/N)$. This class of examples includes quantum tori, quantum enveloping algebras, quantum groups, skein and stated skein algebras.
\par 
  A \textit{quantization}  is then given by the star-product together with  representations $\pi_N: \mathcal{A}_N\to \End(V_N)$ which we will assume to be finite dimensional in this paper and equipped with a definite positive Hermitian form $\left<\cdot, \cdot\right>$.  We then define the linear map
$$ \Opp_N: \mathcal{A}^{cl} \cong \mathcal{A}_N  \xrightarrow{\pi_N}  \End(V_N).$$

\par We will make the assumption that $\Opp_N(f^*f)$ has non-negative spectra for every $f$ (positive quantization). This will be automatically satisfied if $\Opp_N(f^*)$ is the dual (for $\left<\cdot, \cdot\right>$) of $\Opp_N(f)$; in this case we will say that $\pi_N$ is a $*$-representation. 
\vspace{2mm}
\par The action of $G$ on $\mathcal{A}^{cl}$ induces an action of $G$ on $\pi_N(\mathcal{A}_N)\subset \End(V_N)$. We will assume that $G$ acts by inner unitary automorphisms, which means that we have a projective representation
$$ \rho_N: G \rightarrow \PU(V_N)$$
such that the following Egorov identity holds
$$ \begin{array}{ll} \left\lVert \rho_N(\phi)^{-1}\Opp_N(f)\rho_N(\phi) -\Opp_N(f\circ \phi)\right\rVert  \xrightarrow[N\to \infty]{} 0 &, \forall f\in \mathcal{A}^{cl} ,\forall \phi \in G. \end{array} $$
This condition will be automatically satisfied if $\pi_N$ is an irreducible representation since, in that case, $\pi_N(\mathcal{A}_N)=\End(V_N)$ by the Schur lemma and all automorphisms of $\End(V_N)$ are inner; so $\rho_N$ exists and unique in this case.
 To deal with linear representations rather than projective ones, we then choose a central extension $\hat{G}$ of $G$ and a lift, still denoted $\rho_N$, to a linear unitary representation
$$ \rho_N: \hat{G} \rightarrow \U(V_N).$$

\par Eventually, the data $\left(\mathcal{A}_N, V_N, \Opp_N, \rho_N\right)_N$ will be referred to as a \textit{quantum system} associated to the classical one $(\mathcal{A}^{cl}, \tau^{cl}, G)$.

\subsection{Example of the two-dimensional torus}

\par The most studied example of such quantized system is the quantization of the two-dimensional torus $\mathbb{T}^2=U(1)^2\cong \quotient{\mathbb{R}^2}{\mathbb{Z}^2}$  with symplectic structure induced from the symplectic form $\omega=\begin{pmatrix} 0&-1\\ 1&0 \end{pmatrix}$ on $\mathbb{R}^2$. The symplectic action of the group $G=\mathrm{SL}_2(\mathbb{Z})$ on $\mathbb{R}^2$ passes to the quotient by $\mathbb{Z}^2$ giving a symplectic action on $\mathbb{T}^2$. We refer to \cite{EspostiGraffi, LV, GU, KR, Koju1} for various equivalent descriptions of its quantization and construction of the Weil representations, which we now summarize.
\vspace{2mm}
\par Consider the complex torus $\mathbb{T}^2_{\mathbb{C}}=(\mathbb{C}^*)^2$ whose algebra of regular functions is $\mathcal{O}_{\mathbb{T}^2}=\mathbb{C}[x^{\pm 1}, y^{\pm 1}]$ (here $x,y: \mathbb{T}^2_{\mathbb{C}}\to \mathbb{C}$ are the functions $x((z_1,z_2))=z_1$ and $y((z_1,z_2))=z_2$).
Then $\mathbb{T}^2$ is the compact real form of $\mathbb{T}^2_{\mathbb{C}}$ associated to the $*$ involution defined by $x^*:= x^{-1}$ and $y^*=y^{-1}$. Said differently, a point $(z_1,z_2)\in \mathbb{T}^2$, corresponding to the character $\chi_{(z_1,z_2)}: \mathcal{O}_{\mathbb{T}^2} \to \mathbb{C}$ sending $x$ and $y$ to $z_1$ and $z_2$, belongs to $\mathbb{T}^2= U(1)^ 2 \subset (\mathbb{C}^*)^2$ if and only if $\chi_{(z_1,z_2)}(f^*)= \overline{\chi_{(z_1,z_2)}(f)}$ for all $f\in \mathcal{A}^{cl}$ (i.e. if and only if $z_1=\overline{z_1}^{-1}$ and $z_2=\overline{z_2}^{-1}$). 
We set $\lVert f \rVert = \max_{(e^{i\theta_1}, e^{i\theta_2})\in \mathbb{T}^2} |f(e^{i\theta_1}, e^{i\theta_2})|$. So the commutative $C^*$ algebra $(\mathbb{C}[x^{\pm 1}, y^{\pm 1}], *, \lVert\cdot \rVert)$ describes our phase space $\mathbb{T}^2$. The Liouville measure associated to the symplectic form $\omega$ induces the classical state $\tau^{cl}(f):= \frac{1}{Vol(\mathbb{T}^2)}\int_{\mathbb{T}^2} f dV$ characterized by $\tau^{cl}(x^ny^m)= 0$ if $(n,m)\neq (0,0)$ and $\tau^{cl}(1)=1$. 
\vspace{2mm}
\par A star-product on $\mathbb{C}[x^{\pm 1},y^{\pm 1}][[\hbar]]$ is then given by the formula
$$ x^ay^b \star x^{a'}y^{b'} := \exp\left( \pi i(b'a-a'b)\hbar\right) x^{a+a'}y^{b+b'}. $$
The algebras $\mathcal{A}_N:=
\left\{ \begin{array}{ll}
 \mathbb{C}[x^{\pm 1},y^{\pm 1}][[\hbar]]\otimes_{\hbar=2/N} \mathbb{C}& \mbox{, if }N\mbox{ is odd,}\\
\mathbb{C}[x^{\pm 1},y^{\pm 1}][[\hbar]]\otimes_{\hbar=1/N} \mathbb{C}& \mbox{, if }N\mbox{ is even.}
\end{array}\right. $ are naturally presented by generators $x^{\pm 1}$ and $y^{\pm 1}$ and relations
$$ xy=A^2 yx$$
where $A=\left\{ \begin{array}{ll}
\exp\left( \frac{2i\pi}{N} \right) & \mbox{, if }N\mbox{ is odd,}\\
\exp\left( \frac{i\pi}{N} \right)& \mbox{, if }N\mbox{ is even.}
\end{array}\right. $  The algebras $\mathcal{A}_N$ are usually referred to as \textit{quantum tori} in literature whereas the groups generated by the elements $x^{\pm 1}$, $y^{\pm 1}$ and $A^2$ are called the \textit{Heisenberg groups}.
\vspace{2mm}
\par Irreducible representations $\pi_N: \mathcal{A}_N\rightarrow \End(V_N)$ are defined by setting $V_N:= \mathbb{C}^N$ Hermitian with orthonormal basis $(e_1, \ldots, e_N)$ and by the following formulas
$$ \begin{array}{ll} \pi_N(x) e_i:= A^{2i} e_i & \pi_N(y)e_i:= e_{i+1} \end{array}$$
 where indexes are taken modulo $N$.  They are referred as the \textit{Schr\"odinger representations}.  We can now define $\Opp_N$ via $\Opp_N(x^n y^m):= \pi_N( A^{-nm}\pi_N(x^ny^m))$. The representations $\pi_N$ are easily showed to be irreducible $*$ representations (since if $M$ represents the matrix of either  $\pi_N(x)$ or $\pi_N(y)$ in the basis $(e_i)_i$, then $M^{-1}={}^t\overline{M}$)
 and to satisfy the quantum average condition of Theorem $\ref{main_th}$. Some authors  extend by density this quantization from the algebra of regular functions to the algebra of smooth functions (see \textit{e.g.} \cite{KR}).
\vspace{2mm}
\par  Now projective representations $\rho_N : \mathrm{SL}_2(\mathbb{Z})\rightarrow \PU (V_N)$ are defined on the generators $T=\begin{pmatrix} 1 & -1 \\ 0 & 1\end{pmatrix}$ and $S=\begin{pmatrix} 0 & 1 \\ -1 & 0 \end{pmatrix}$ by the formulas
$$ 
 \rho_N(S)e_i =  \left(\frac{1}{N}\sum_{k\in\mathbb{Z}/N\mathbb{Z}}A^{k^2} \right) \sum_j A^{2ij}e_j; \quad \rho_N(T) e_i = A^{i^2} e_i. $$

These projective representations can be lifted to linear representations of $\mathrm{SL}_2(\mathbb{Z})$. They were first defined by Kloosterman in \cite{Kl}, and usually referred to as \textit{Weil representations}. They satisfy the exact Egorov identity. This fact can be compared to the case of higher dimensional tori where the equivalent projective representations of the symplectic groups have to be centrally extended by $\mathbb{Z}/2\mathbb{Z}$ to be lifted to linear ones when $N$ is even. It is however more usual to rather centrally extend the symplectic group by $\mathbb{Z}$ thus obtaining the so-called \textit{metaplectic group} to lift the Weil representations. In many textbook, when $N$ is odd, the Weil representation is only defined on a sub-group of the symplectic group due to its first definitions related to modular forms and Theta functions. 
\vspace{2mm}
\par Eventually let $\phi\in \mathrm{SL}_2(\mathbb{Z})$ be an Anosov element, that is a matrix such that $|\tr(\phi)|>2$. The action of $G=\mathbb{Z}$ on $\mathbb{T}^2$ where $n$ acts by $\phi^n$, is well known to be ergodic. The quantum system $\left(\mathcal{A}_N, V_N, \Opp_N, \rho_N\right)$ is a quantization of the classical one $(\mathcal{O}_{\mathbb{T}^2_{\mathbb{C}}}, \tau^{cl}, \mathbb{Z})$  and the operators $\rho_N(\phi)$ are called \textit{Arnold's quantum cat map} (see \cite{HB, BdB, KR}). Let $(v_{1,N}, \ldots, v_{N,N})$ be a basis of $V_N$ consisting of norm one eigenvectors of $\rho_N(\phi)$ such that one has the following decomposition:
$$V_N = \mathbb{C}v_{1,N}\oplus \ldots \oplus \mathbb{C}v_{N,N} $$
 of $V_N$ into one-dimensional  sub-spaces invariant for the action of $\mathbb{Z}$.  Theorem $\ref{main_th}$ states that for almost every sequence $j= (j_N)_N$ and all polynomials $f\in \mathbb{C}[x^{\pm 1},y^{\pm 1}]$, one has 
 $$ \left< v_{j_N}, \Opp_N(f) v_{j_N}\right>_N \xrightarrow[N \to \infty]{}  \frac{1}{\Vol (\mathbb{T})^2} \int_{\mathbb{T}^2} f dV.$$
 \par This is the main theorem of \cite{BdB}. Our theorem is thus a generalization of this quantum ergodic theorem in the case where $G$ is a general group.

\section{The quantum ergodic theorem}

\subsection{States associated to invariant subspaces in positive quantization}
\par Recall that a state on the $C^*$ algebra $\mathcal{A}^{cl}$ is a continuous $*$-linear form $\tau : \mathcal{A}^{cl} \to \mathbb{C}$ such that $\tau(1)=1$ and $\tau(f^*f)\geq 0$. $\tau$ is $G$ invariant if $\tau(g\cdot f)=\tau(f)$ for all $f\in \mathcal{A}^{cl}$ and $g\in G$. When $\mathcal{A}^{cl}=C(\mathcal{M})$, there is a bijection between the set of probability measures on $\mathcal{M}$ and the set of states  sending the measure $\mu$ to the state $\tau_{\mu}$ defined by
$$ \tau_{\mu}(f):= \frac{1}{\mu(\mathcal{M})}\int_{\mathcal{M}} f d\mu.$$

\begin{definition} Let $W\subset V_N$ be a sub-space. Let $\Pi_W$ be the orthogonal projection on $W$. The state $\tau_W$ is defined by
$$ \tau_W(f):= \frac{1}{\dim(W)} \tr\left(\Pi_W \Opp_N(f)\right).$$
\end{definition}
\par The positivity of $\tau_W$ results from the positivity of the quantization.

\par Let $(E,d)$ denotes the dual space of $(\mathcal{A}^{cl}, \lVert \cdot \rVert)$ equipped with a metric of the $*$-weak topology. For instance, if $(f_n)_n$ is a countable family  dense in $(\mathcal{A}^{cl}, \lVert \cdot \rVert)$, we can choose
$$ d(\tau_1, \tau_2):= \sum_n \frac{1}{2^n}\lvert \tau_1(f_n)-\tau_2(f_n)\rvert. $$

The set $\Delta$ of $G$-invariant states is a convex compact subset of the (compact) unit ball of $(E,d)$. 
\par Recall that $G$ acts ergodically on $(\mathcal{A}^{cl}, \tau^{cl})$ if the only $G$ invariant vectors of $L^2(\mathcal{A}^{cl}, \tau^{cl})$ are scalars. When $\mathcal{A}^{cl}=C(\mathcal{M})$ and $\tau^{cl}$ corresponds to the Liouville measure, this equivalent to the fact that 
 every $G$-invariant Borel set has either measure $0$ or $1$. The following lemma is classical:
 
 \begin{lemma}
$G$ acts ergodically on $(\mathcal{A}^{cl},   \tau^{cl})$ if and only if $\tau^{cl}$ is an extremal point of $\Delta$.
\end{lemma}

\subsection{Proof of Theorem $\ref{main_th}$}

\par Under the assumptions of Theorem $\ref{main_th}$, the state $\tau_{V_N}$ is the barycenter of the states $(\tau_{ W_{1,N}}, \ldots, \tau_{W_{p_N,N}})$ with weights $\alpha_{i,N}:= \frac{\dim (W_{i,N})}{\dim (V_N)}$ in the vector metric space $(E,d)$, that is we have $\sum_i \alpha_{i,N} = 1$ and $\tau_{V_N}= \sum_i \alpha_{i, N} \tau_{W_{i, N}}$. The hypotheses of Theorem $\ref{main_th}$ imply that the sequence of the barycenters $\tau_{V_N}$ converges to the extremal point $\tau^{classical}$ in the convex compact $\Delta$. Theorem $\ref{main_th}$ thus results from the following proposition, whereas Corollary $\ref{coro_main_th}$ is an easy consequence:

\begin{proposition}\label{prop_geom}.
 Let $(E,d)$ be a metric vector space and $\Delta\subset E$ be a convex compact subset. 
\vspace{2mm}
\par For all $N\geq 1$, fix an integer $p_N>0$ and some points $\tau_{1,N}, \ldots, \tau_{p_N,N} \in \Delta$ together with weights $\alpha_{1,N},\ldots, \alpha_{p_N,N} \in [0,1]$ such that  $\sum_i \alpha_{i,N}=1$.
\vspace{2mm}
\par Denote by $B_N:=\sum_i \alpha_{i,N} \tau_{i,N}$ the barycenter of the weighted points. Eventually choose $\tau \in \Delta$ an extremal point of $\Delta$.
\vspace{2mm}
\par Suppose that
$$ B_N \xrightarrow[N \to \infty]{d} \tau.$$
 Then there exist subsets $J_N \subset \{1, \ldots , p_N\}$ such that:
\begin{enumerate}
\item If we note $\|J_N\|:=\sum_{i\in J_N}\alpha_{i,N}$, then
$$ \|J_N\|\xrightarrow[N \to \infty]{} 1.$$
\item For any sequence $j=(j_N)_{N\geq 1}$ with $j_N \in J_N$, one has
$$ \tau_{j_N,N} \xrightarrow[N \to \infty]{d} \tau.$$
\end{enumerate}
\end{proposition}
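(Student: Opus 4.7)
My plan is to reduce the statement to a single concentration estimate and then conclude by a diagonal argument.

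\textbf{Reduction.} I claim the conclusion follows at once from the key concentration estimate
$$(\ast)\qquad \forall \epsilon > 0,\quad \beta_N(\epsilon)\ :=\ \sum_{i\ :\ d(\tau_{i,N},\tau)\,\geq\,\epsilon}\alpha_{i,N}\ \xrightarrow[N\to\infty]{}\ 0.$$
Given $(\ast)$, I pick a sequence $\epsilon_k\searrow 0$, then integers $N_1<N_2<\dots$ with $\beta_N(\epsilon_k)<1/k$ for all $N\geq N_k$, and for $N\in[N_k,N_{k+1})$ I set $J_N:=\{i\,:\,d(\tau_{i,N},\tau)<\epsilon_k\}$. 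Then $\|J_N\|=1-\beta_N(\epsilon_k)\to 1$, while any selection $j_N\in J_N$ satisfies $d(\tau_{j_N,N},\tau)<\epsilon_k\to 0$, which is exactly the two required conclusions.

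\textbf{Proof of $(\ast)$ by contradiction.} Suppose that along some subsequence, which I keep calling $N$ for simplicity, one has $\beta_N(\epsilon)\geq \delta$ for fixed $\epsilon,\delta>0$. I split the barycenter into its far and close contributions
$$ B_N \ =\ \beta_N(\epsilon)\,c_N\,+\,\bigl(1-\beta_N(\epsilon)\bigr)\,g_N,\qquad c_N,\,g_N\in\Delta,$$
where $c_N$ is the normalized weighted average over the indices $i$ with $d(\tau_{i,N},\tau)\geq\epsilon$ and $g_N$ the one over the remaining indices; both lie in $\Delta$ by convexity. Using compactness of $\Delta$ and of $[0,1]$, together with continuity of the vector space operations of $E$, I extract a further subsequence along which $\beta_N(\epsilon)\to\beta\in[\delta,1]$, $c_N\to c\in\Delta$ and $g_N\to g\in\Delta$. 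Taking the limit of the splitting using $B_N\to\tau$ yields
$$\tau\ =\ \beta c\,+\,(1-\beta) g,\qquad \beta > 0,$$
and extremality of $\tau$ then forces $c=\tau$ (if $\beta\in(0,1)$ both summands must equal $\tau$; if $\beta=1$ this is immediate).

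\textbf{Where the difficulty lies.} The step I expect to be the most delicate is producing a contradiction from $c=\tau$, because $c$ is only the limit of barycenters of convex combinations supported on the closed set $F_\epsilon:=\{x\in\Delta\,:\,d(x,\tau)\geq\epsilon\}$, and nothing prevents $\tau$ from sitting in the closed convex hull of $F_\epsilon$. The clean way to close the argument, and the one that matches the picture in Figure~\ref{fig_ergodicity}, is to upgrade the compactness step by extracting also a weak-$*$ limit $\nu$ of the normalized atomic measures
$$\nu_N\ :=\ \beta_N(\epsilon)^{-1}\!\!\!\sum_{d(\tau_{i,N},\tau)\geq\epsilon}\alpha_{i,N}\,\delta_{\tau_{i,N}}$$
on the compact set $\Delta$: the limit $\nu$ is then a probability measure supported on $F_\epsilon$ (so $\nu\neq\delta_\tau$) whose barycenter equals $c=\tau$. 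The contradiction is obtained through the form of extremality genuinely used in the paper, namely that the only probability measure on $\Delta$ whose barycenter is an extremal point $\tau$ is the Dirac mass $\delta_\tau$; this last implication, proved via a Hahn--Banach separation argument in the locally convex setting relevant to the paper, is where extremality of $\tau$ really enters and is the geometric heart of the proposition.
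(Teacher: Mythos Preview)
Your argument is correct and complete, but it follows a genuinely different route from the paper's. The paper isolates an auxiliary Lemma~\ref{lemma_geom}: for each \emph{fixed} finite configuration $\tau_1,\ldots,\tau_p$ one works inside the polytope $\widetilde\Delta=\ch(\tau,\tau_1,\ldots,\tau_p)$, where $\tau$ is not only extreme but \emph{exposed}, picks a continuous linear functional $L$ with $L(\tau)<c<L(\tau_i)$ for all $i$, and from this gets an explicit lower bound $d(\tau,B)>\frac{(1-|B(\tau,\epsilon)|)c-L(\tau)}{\|L\|}$; the proposition then follows by applying this at each level with $\epsilon=1/r$, $\delta=1-1/r$ and running the same diagonal construction of $J_N$ that you use. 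Your approach instead proceeds by contradiction and compactness: you split $B_N$ into far and near sub-barycenters, extract limits, and close the argument by passing to weak-$*$ limits of the atomic measures $\nu_N$ and invoking the Choquet-type characterization that the only probability measure on $\Delta$ with barycenter an extreme point is the Dirac mass.

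What each approach buys: the paper's method stays entirely at the level of finitely many points and one separating hyperplane per stage, so it needs no measure theory on $\Delta$ and no barycenter existence theorem; its key trick is replacing $\Delta$ by the polytope $\widetilde\Delta$ precisely to make $\tau$ exposed (the paper's Remark after Lemma~\ref{lemma_geom} stresses this). Your method is more conceptual and global: it never needs $\tau$ to be exposed, but it does rely on weak-$*$ compactness of probability measures on $\Delta$, existence and weak-$*$ continuity of barycenters, and the Dirac-mass characterization of extreme points---all standard in the locally convex setting of the paper, as you note, but heavier machinery than a single separating linear form.
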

\vspace{2mm}
\par Figure $\ref{fig_ergodicity}$ illustrates the proposition by showing two sets of points inside a compact convex at two different moments. When the barycenter approaches an extremal point, then 'almost all' points approach it as well. This is our geometric interpretation of the Schnirelman theorem.

\vspace{2mm}
\par The proof of Proposition $\ref{prop_geom}$ will be deduced from the following:

\begin{lemma}\label{lemma_geom}
Let $(E,d)$ be a metric vector space, $\tau_1,\ldots, \tau_p \in E$ be some points equipped with weights $\alpha_1,\ldots,\alpha_p\in [0,1]$ such that $\sum_i \alpha_i =1$.
and denote by $B:=\sum_i \alpha_i \tau_i$ their barycenter. Let $\tau \in E$ be a point such that  $\tau$ is an extremal point of the convex hull $\widetilde{\Delta} := \ch \left(\tau,\tau_1,\ldots, \tau_p \right)$ of the points $\tau$ and $\tau_i$.
\vspace{2mm} \par
 For a subset $K\subset E$, we will use the notation $\lvert K \rvert:= \sum_{i , \tau_i \in K} \alpha_i \in [0,1]$, that is the sum of the weights of every points $\tau_i$ which are inside $K$.
\vspace{2mm}
\par Then for all $\epsilon >0$ and for all $0<\delta <1$, there exists $d>0$ such that
$$ d(\tau, B)\leq d \implies \left| B(\tau, \epsilon) \right| \geq \delta, $$
where $B(\tau, \epsilon)$ denotes the ball of center $\tau$ and  radius $\epsilon$. 
\end{lemma}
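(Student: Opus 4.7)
The plan is to reduce the lemma to the elementary fact that a compact subset of a metric vector space missing a point $\tau$ lies at positive distance from $\tau$. I will apply this to the image, under the barycentre map, of the convex compact set of ``bad'' weight vectors, i.e.\ those whose total mass on the indices inside $B(\tau,\epsilon)$ is at most $\delta$. Once this image is shown to miss $\tau$, whenever $B$ is close enough to $\tau$ the actual weight vector $(\alpha_i)$ cannot be bad, forcing $|B(\tau,\epsilon)| > \delta$.

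Concretely, set $I := \{i : d(\tau_i,\tau) < \epsilon\}$, so that $|B(\tau,\epsilon)| = \sum_{i \in I}\alpha_i$. If $I = \{1,\ldots,p\}$ then $|B(\tau,\epsilon)| = 1 \geq \delta$ and any $d>0$ works, so assume otherwise. Introduce the compact convex set
\[
S := \Bigl\{ \beta \in [0,1]^p : \sum_i \beta_i = 1,\ \sum_{i \in I} \beta_i \leq \delta \Bigr\} \subset \mathbb{R}^p
\]
and the continuous linear map $\Phi : \mathbb{R}^p \to E$, $\beta \mapsto \sum_i \beta_i \tau_i$. Then $\Phi(S) \subset E$ is compact and convex. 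The crucial step is to show $\tau \notin \Phi(S)$: if $\tau = \sum_i \beta_i \tau_i$ for some $\beta \in S$, then $\tau$ is a convex combination of points of $\widetilde{\Delta}$, and extremality of $\tau$ in $\widetilde{\Delta}$ forces every index with $\beta_i > 0$ to satisfy $\tau_i = \tau$; such indices automatically lie in $I$, so $\sum_{i\in I}\beta_i = 1 > \delta$, contradicting $\beta \in S$.

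Once $\tau \notin \Phi(S)$ is established, compactness of $\Phi(S)$ gives $d := \tfrac{1}{2} d(\tau,\Phi(S)) > 0$, and the condition $d(\tau,B) \leq d$ forces $B \notin \Phi(S)$; since the weights $(\alpha_i)$ automatically satisfy the remaining defining constraints of $S$ (positivity and summing to $1$), only the inequality $\sum_{i\in I} \alpha_i \leq \delta$ can fail, which yields $|B(\tau,\epsilon)| > \delta$ as desired. The main obstacle I anticipate is the extremality step: one must use it in the sharp form ``in any convex representation $\tau = \sum_i \beta_i \tau_i$ one has $\beta_i = 0$ whenever $\tau_i \neq \tau$'', rather than the weaker ``$\tau \notin \ch(\tau_1,\ldots,\tau_p)$'', in order to absorb the case when some $\tau_i$ happens to coincide with $\tau$; the remainder of the argument is routine topology in a metric vector space.
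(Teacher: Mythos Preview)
Your argument is correct and takes a genuinely different route from the paper's.

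The paper proceeds by choosing a continuous linear functional $L$ with $L(\tau)<c<L(\tau_i)$ for all $i$ (i.e.\ it treats $\tau$ as an \emph{exposed} point of the polytope $\widetilde{\Delta}$), then picks $c$ so that the slab $F_c=\{x\in\widetilde{\Delta}:L(x)\le c\}$ is contained in $B(\tau,\epsilon)$, and finally derives the explicit bound
\[
d(\tau,B)\;>\;\frac{(1-|B(\tau,\epsilon)|)\,c-L(\tau)}{\lVert L\rVert},
\]
from which $d=\dfrac{(1-\delta)c-L(\tau)}{\lVert L\rVert}$ does the job. Your approach instead packages all ``bad'' weight vectors into the compact convex set $S\subset\mathbb{R}^p$, pushes it forward by the barycentre map $\Phi$, and uses extremality of $\tau$ directly to show $\tau\notin\Phi(S)$; the required $d$ is then just half the distance from $\tau$ to this compact image.

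What each buys: the paper's proof is more explicit (it produces a concrete formula for $d$ in terms of $L$ and $c$) and matches the picture in Figure~\ref{fig_lemma_geom}. Your proof is softer but more robust: it needs no continuous linear functionals on $E$ (which, for a bare ``metric vector space'', are not guaranteed to exist in abundance), and it handles the degenerate case where some $\tau_i$ coincides with $\tau$ without extra comment, whereas the paper's separating inequality $L(\tau)<c<L(\tau_i)$ tacitly assumes $\tau_i\neq\tau$ for all $i$. Your use of extremality in the sharp form---``in any convex representation $\tau=\sum_i\beta_i\tau_i$ with $\tau_i\in\widetilde{\Delta}$, every $\beta_i>0$ forces $\tau_i=\tau$''---is exactly what is needed, and the remaining compactness step is unproblematic once one notes that $\Phi$ is continuous (finite sums and scalar multiples in a topological vector space) and $S$ is nonempty because $I\neq\{1,\ldots,p\}$.
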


\begin{remark}
In this lemma, instead of having a general convex compact as in Proposition $\ref{prop_geom}$, we choose the convex hull of the points $\tau_i$ and $\tau$. The reason for this choice is that in order to provide the linear map $L$ appearing in the following proof, we need the extremal point $\tau$ to be \textit{exposed}.
The point $\tau^{cl}$ of Figure \ref{fig_ergodicity} is an example of a not exposed extremal point.
\end{remark}

\begin{proof}[Proof of $\ref{lemma_geom}$] {}
\par First, it is a general fact (see \textit{e.g.} \cite{Brezis}) that in any locally compact space $E$, we can find a continuous linear form $L\in E'$ separating finite sets of points, that is such that there exists $c\in \mathbb{R}$ such that
$$\begin{array}{ll} L(\tau)<c<L(\tau_i) &\mbox{, for all }i\in \{1,\ldots, p\}. \end{array}$$
  We define $a:=L(\tau)$ and $b:=\max_i \{L(\tau_i)\}$, hence $L(\widetilde{\Delta})=[a,b]$.
 Next fix $\epsilon >0$ and $0<\delta<1$ and choose $c\in (a,b)$ such that
$$ F_c := \left\{ x \in \widetilde{\Delta}\mbox{, s.t }L(x)\leq c \right\} \subset B(\tau, \epsilon). $$
 Writing $\alpha := |F_c|$,  we now want to show that 
$$ d(\tau, B) > \frac{(1-\alpha)c-a}{\Vert L \Vert}.$$
Define the sub-barycenters $B_1:=\sum_{i | \tau_i \in F_c}\alpha_i$ and $B_2:= \sum_{i | \tau_i\notin F_c} \alpha_i$. 
\\ By convexity of $\widetilde{\Delta} \backslash F_c$, one has $L(B_2)>c$. Moreover, since
$$ B= \alpha B_1 + (1-\alpha)B_2, $$
  we have
 $$ L(B)=\alpha L(B_1) +(1-\alpha)L(B_2).$$
 Using the fact that $L$ is continuous, we obtain
 \begin{eqnarray*}
  d(\tau, B) &\geq & \frac{\left| L(B)-L(\tau)\right|}{ \Vert L \Vert} = \frac{ \alpha L(B_1)+(1-\alpha )L(B_2) - a}{\Vert L \Vert } 
  \\ & \geq & \frac{(1-\alpha)L(B_2) - a}{\Vert L \Vert } 
  \\ &>& \frac{(1-\alpha)c-a}{\Vert L \Vert}
  \end{eqnarray*}
 Figure \ref{fig_lemma_geom} illustrates the proof.
     
\begin{figure}[!h] 
\centerline{\includegraphics[width=6cm]{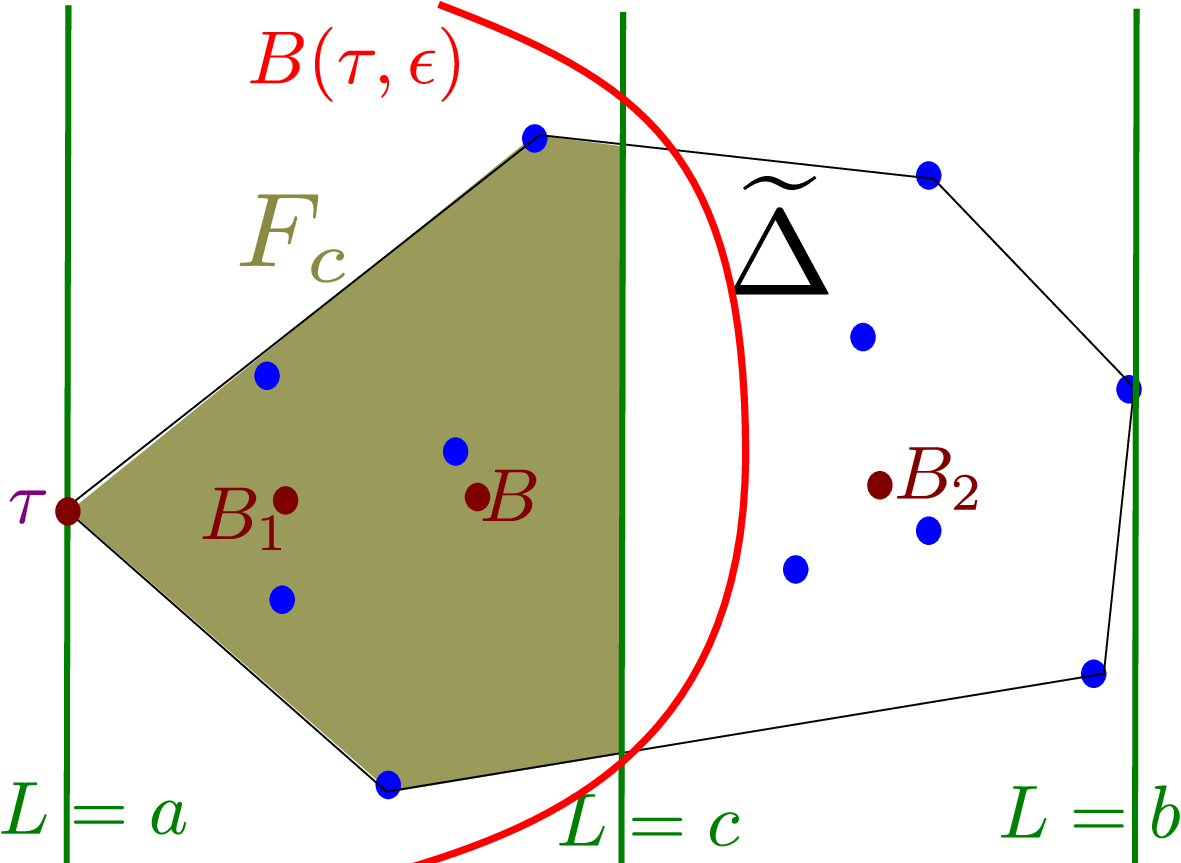} }
\caption{Illustration of the proof of Lemma \ref{lemma_geom}.} 
\label{fig_lemma_geom} 
\end{figure} 
  \vspace{2mm}
  \par To conclude the proof, we remark that since $B(\tau,\epsilon)\subset F_c$, we have $\left| B(\tau,\epsilon)\right| \leq |F_c| = \alpha$. Therefore, one has
  $$ d(\tau, B) > \frac{(1-\left| B(\tau, \epsilon)\right|)c-a}{\Vert L \Vert},$$
  and $d:=\frac{(1-\delta)c-a}{\Vert L \Vert}$ verifies the conclusion of the lemma.
  \end{proof}

  \begin{proof}[Proof of Proposition $\ref{prop_geom}$] .
  \par Fix $r>0$ and apply Lemma $\ref{lemma_geom}$ to the polytope $\widetilde{\Delta_r} := \ch \left\{ \tau_{1,r}, \ldots, \tau_{p_r,r}, \tau \right\}\subset \Delta$ with $\epsilon= \frac{1}{r}$ and $\delta=1-\frac{1}{r}$. We obtain $d_r>0$ such that
  $$ d(\tau, B_r)\leq d_r \implies \left| B\left(\tau, \frac{1}{r}\right) \right| \geq 1-\frac{1}{r}. $$
  Since $(B_N)_N$ converges to $\tau$, there exists a rank $N(r)$ such that
  \begin{eqnarray*} N\geq N(r) &\implies & d(\tau, B_N) \leq d_r \\
   & \implies & \left| B\left(\tau, \frac{1}{r}\right) \right| \geq 1-\frac{1}{r}.
   \end{eqnarray*}
    We can suppose that the sequence $(N(r))_r$ is strictly increasing. For each $N>0$, there exists a unique $r>0$ such that  $N(r)\leq N < N(r+1)$. We set
   $$ J_N:= \left\{ j \mbox{, s.t. }d(\tau, \tau_{j,N}) \leq \frac{1}{r} \right\}. $$
   \begin{enumerate}
   \item Since $N\geq N(r)$, we have $|J_N| = \left| B(\tau, \frac{1}{r})\right| \geq 1-\frac{1}{r}$. Therefore, one has
   $$ | J_N | \xrightarrow[N\to \infty]{} 1.$$
   \item For any sequence $j=(j_N)_N$ with $j_N\in J_N$, one has $d(\tau, \tau_{j_N, N})\leq \frac{1}{r}$. We deduce the convergence
   $$ d(\tau, \tau_{j_N, N})\xrightarrow[N\to \infty]{} 0.$$
   \end{enumerate}
   \end{proof}

\section{The $\SU(2)$ character variety and the \RT representations}

\subsection{Character varieties}

\par Our main application of Theorem $\ref{main_th}$ concerns the \RT quantization of the $\SU(2)$ character variety. Given a closed oriented connected surface $\Sigma$, its associated character variety is the affine variety:
$$ \mathcal{M}(\Sigma) := {\Hom\left( \pi_1(\Sigma), \SU(2)\right)}\sslash {\SU(2)}.$$
It is a compact real form of the $\SL_2$ character variety: 
$$ \mathcal{M}(\Sigma)_{\mathbb{C}} := {\Hom\left( \pi_1(\Sigma), \SL_2(\mathbb{C}) \right)}\sslash {\SL_2(\mathbb{C})}.$$
Said differently, the space of representations $\mathcal{R}(\Sigma):=\Hom\left(\pi_1(\Sigma), \SL_2(\mathbb{C})\right)$ as a natural structure of affine variety on which the group $\SL_2(\mathbb{C})$ acts algebraically by conjugacy. The algebraic quotient we consider (whence the notation with a double bar for the quotient) is by taking the maximal spectrum  of the sub-algebra of  $\SL_2(\mathbb{C})$ invariant functions of $\mathcal{O}_{\mathcal{R}(\Sigma)}$. This quotient, familiar in Geometric Invariant Theory, can be thought as the smallest Haussdorf quotient possible (see \textit{e.g.} \cite{Sikora} for details). 

 By a \textit{multicurve} in $\Sigma$, we mean an isotopy class of embedded compact one-manifold in $\Sigma$ without contractible component. To any element $\gamma=\gamma_1\bigsqcup\ldots \bigsqcup \gamma_n$ of the set $\mathcal{MC}$ of multicurves, we associate a regular function $f_{\gamma}\in \mathcal{O}_{\mathcal{M}(\Sigma)_{\mathbb{C}}}$ by the formula:
 $$ f_{\gamma}([\rho]):= \prod_i \left( -\tr \rho(\gamma_i)\right).$$
 \par In \cite{Bullock,Sikora} it was proved that the set $\{ f_{\gamma}, \gamma\in \mathcal{MC}\}$ forms a basis of the algebra $\mathcal{O}_{\mathcal{M}(\Sigma)_{\mathbb{C}}}$ (see also \cite{ChaMa} for an alternative proof).
The $\SU(2)$-character variety is the compact real form of the $\SL_2(\mathbb{C})$ character variety associated to the $*$ involution defined by $f_{\gamma}^*= f_{\gamma}$. 
For $f \in \mathcal{O}_{\mathcal{M}(\Sigma)_{\mathbb{C}}}$, we set $\lVert f \rVert := \max_{[\rho]\in \mathcal{M}(\Sigma)} f([\rho])$. 
Both $\mathcal{M}(\Sigma)$ and $\mathcal{M}(\Sigma)_{\mathbb{C}}$ are singular, their smooth loci are the loci of classes $[\rho]$ of irreducible representations $\rho$ when $\Sigma$ has genus $g\geq 2$ and is the locus of non central representations when $g=1$. 
 \vspace{2mm}
 \par The smooth loci of $\mathcal{M}(\Sigma)_{\mathbb{C}}$ and $\mathcal{M}(\Sigma)$ have a symplectic form $\omega_{AB}$ defined  by Atiyah and Bott in the context of gauge theory (\cite{AB}). Goldman showed in \cite{Goldman86} that this symplectic structure induces a Poisson bracket on the algebra $\mathcal{O}_{\mathcal{M}_{\mathbb{C}}}$ of regular functions. If $dV$ denotes the Liouville measure associated to this symplectic form, we define the state $\tau^{cl}$ by 
 $$ \tau^{cl}(f):= \frac{1}{\Vol(\mathcal{M}(\Sigma))} \int_{\mathcal{M}(\Sigma)} f dV.$$
 We refer to \cite{FrohmanKania_YMmeasure} for explicit computations of $\tau^{cl}(f_{\gamma})$. 
 We thus have a classical system $\mathcal{A}^{cl}=(\mathcal{O}_{\mathcal{M}(\Sigma)_{\mathbb{C}}}, *, \lVert \cdot \rVert)$ with classical state $\tau^{cl}$.

 \subsection{Mapping class group action and the first Johnson subgroup}

 \par \textbf{Mapping class group } 
 \vspace{2mm}
 \par Let $\Homeo^+(\Sigma)$ denotes the group of orientation-preserving homeomorphisms of $\Sigma$. Being a topological group, its connected component containing the identity map is a normal sub-group and we define the \textit{mapping class group} $\Mod(\Sigma)$ as the quotient of $\Homeo^+(\Sigma)$ by this normal subgroup. It is thus the group of (orientation-preserving) homeomorphisms modulo homotopy (see \cite{FM} for details). In short:
 $$ \Mod(\Sigma):= \pi_0(\Homeo^+(\Sigma)).$$
 \par This group naturally (left) acts on the fundamental group $\pi_1(\Sigma)$ and thus (right) acts on the character variety via the formula 
 $$ \rho^{\phi} (\gamma):= \rho(\phi(\gamma)), \quad \phi \in \Mod(\Sigma), \gamma \in \pi_1(\Sigma), \rho: \pi_1(\Sigma) \to \SL_2(\mathbb{C}).$$. 
 This action is by symplectomorphism so $\tau^{cl}$ is $\Mod(\Sigma)$ invariant.
 \vspace{2mm}
 \par \textbf{Dehn Twists} 
\vspace{2mm}
 \par To any simple closed curve $\gamma\subset \Sigma$ we associate an element $T_{\gamma}\in \Mod(\Sigma)$ defined as follows. Let $S^1\hookrightarrow \Sigma$ be a representing embedding of $\gamma$ and choose $e_{\gamma}:S^1\times [0,1]\hookrightarrow \Sigma$ a tubular neighborhood. We define an orientation-preserving homeomorphism $t_{\gamma}\in \Homeo^+(\Sigma)$ by setting that $t_{\gamma}$ is the identity outside the image of $e_{\gamma}$  and is the image of the homeomorphism $ (s,t)\rightarrow (s\exp(2i\pi t), t)$ of $S^1\times [0,1]$ on the image of $e_{\gamma}$. We call \textit{Dehn twist} along $\gamma$, and denote by $T_{\gamma}\in \Mod(\Sigma)$ the homotopy class of the homeomorphism $t_{\gamma}$. It is a classical result of Lickorish that Dehn twists generate the mapping class group.

\vspace{2mm}
\par \textbf{First Johnson subgroup} 
\vspace{2mm}
\par A curve $\gamma\subset \Sigma$ is called \textit{separating} if $\Sigma \setminus \gamma$ is disconnected. The \textit{first Johnson sub-group} $\mathcal{K}(\Sigma)\subset \Mod(\Sigma)$ is the subgroup generated by Dehn twists along separating curves.

\vspace{2mm}
\par \textbf{Ergodicity}
\vspace{2mm}
\par To apply our quantum ergodic theorem to the character variety, we will use the following results:

\begin{theorem}\label{theorem_mcg_ergodic}
\begin{enumerate}
\item \textbf{\textit{Goldman (\cite{Goldman1})}} The mapping class group acts ergodically on the $\mathrm{SU}(2)$ character variety.
\item \textbf{\textit{Funar-March\'e (\cite{FuMa})}} The first Johnson sub-group acts ergodically on the $\mathrm{SU}(2)$ character variety.
\end{enumerate}
\end{theorem}

\begin{remark} Here we can consider two commutative $C^*$ algebras on which $\Mod(\Sigma)$ acts: the algebra $C(\mathcal{M}(\Sigma))$ of continuous maps $f: \mathcal{M}(\Sigma)\to \mathbb{C}$, where we equip $\mathcal{M}(\Sigma)$ with the analytic topology for which it is compact and Haussdorf and with complex conjugacy,  and the algebra $\mathcal{O}_{\mathcal{M}(\Sigma)_{\mathbb{C}}}$ of regular functions equipped with the $*$-involution defining the real form $\mathcal{M}(\Sigma)$. There is an embedding $i: \mathcal{O}_{\mathcal{M}(\Sigma)_{\mathbb{C}}} \hookrightarrow C(\mathcal{M}(\Sigma))$ of $C^*$ algebras whose image can be thought as the continuous functions which are "polynomials". The image of $i$ is dense in $C(\mathcal{M}(\Sigma))$ for the $\lVert \cdot \rVert$ topology, so $i$ induces an isomorphism $i_* : L^2(\mathcal{O}_{\mathcal{M}(\Sigma)_{\mathbb{C}}}, \tau^{cl}) \xrightarrow{\cong} L^2(C(\mathcal{M}(\Sigma)), \tau^{cl})$. It follows that a group $G\subset \Mod(\Sigma)$ acts ergodically on $C(\mathcal{M}(\Sigma))$ if and only if it acts ergodically on $\mathcal{O}_{\mathcal{M}(\Sigma)_{\mathbb{C}}}$.
\end{remark}

\subsection{Skein algebras and \RT representations}

\par In order to quantize the character variety, we introduce the notion of skein algebras and define some irreducible representations.
\vspace{2mm}
\par \textbf{Skein algebras}
\vspace{2mm}
\par To an oriented compact $3$-manifold $M$, we associate a $\mathbb{Z}[A,A^{-1}]$ module, where $A$ denotes a formal parameter, $\mathcal{S}_A(M)$ defined as follows. The \textit{Kauffman-bracket skein module} $\mathcal{S}_A(M)$ is  the quotient of the free  $\mathbb{Z}[A,A^{-1}]$ module spanned by isotopy classes of framed links embedded in $M$ (including the empty link), by the following skein relations: 
$$
\begin{tikzpicture}[baseline=-0.4ex,scale=0.5,>=stealth]	
\draw [fill=gray!45,gray!45] (-.6,-.6)  rectangle (.6,.6)   ;
\draw[line width=1.2,-] (-0.4,-0.52) -- (.4,.53);
\draw[line width=1.2,-] (0.4,-0.52) -- (0.1,-0.12);
\draw[line width=1.2,-] (-0.1,0.12) -- (-.4,.53);
\end{tikzpicture}
=A
\begin{tikzpicture}[baseline=-0.4ex,scale=0.5,>=stealth] 
\draw [fill=gray!45,gray!45] (-.6,-.6)  rectangle (.6,.6)   ;
\draw[line width=1.2] (-0.4,-0.52) ..controls +(.3,.5).. (-.4,.53);
\draw[line width=1.2] (0.4,-0.52) ..controls +(-.3,.5).. (.4,.53);
\end{tikzpicture}
+A^{-1}
\begin{tikzpicture}[baseline=-0.4ex,scale=0.5,rotate=90]	
\draw [fill=gray!45,gray!45] (-.6,-.6)  rectangle (.6,.6)   ;
\draw[line width=1.2] (-0.4,-0.52) ..controls +(.3,.5).. (-.4,.53);
\draw[line width=1.2] (0.4,-0.52) ..controls +(-.3,.5).. (.4,.53);
\end{tikzpicture}
\hspace{.5cm}
\text{ and }\hspace{.5cm}
\begin{tikzpicture}[baseline=-0.4ex,scale=0.5,rotate=90] 
\draw [fill=gray!45,gray!45] (-.6,-.6)  rectangle (.6,.6)   ;
\draw[line width=1.2,black] (0,0)  circle (.4)   ;
\end{tikzpicture}
= -(A^2+A^{-2}) 
\begin{tikzpicture}[baseline=-0.4ex,scale=0.5,rotate=90] 
\draw [fill=gray!45,gray!45] (-.6,-.6)  rectangle (.6,.6)   ;
\end{tikzpicture}
.$$

\par More precisely, the first relation states that if $L_1,L_2$ and $L_3$ are framed links in $M$ that are identical everywhere except in a small ball in which they  are like respectively the first, second and third pictures, then $[L_1]=A[L_2]+A^{-1}[L_3]$ in the skein module $\mathcal{S}_A(M)$. The second relation states that if $L_2$ is the disjoint union of $L_1$ with a trivial framed unknot, then $[L_2]=-(A^2+A^{-2})[L_1]$ in the skein module. When $M=S^3$, it is easily showed that $\mathcal{S}_A(S^3)$ is one-dimensional, thus $\mathcal{S}_A(S^3)\cong \mathbb{Z}[A,A^{-1}]$. The polynomial associated to such a framed link in $S^3$ is called its Kauffman-bracket polynomial, a framed variant of the celebrated Jones polynomial.
\vspace{2mm}
\par Less obvious is the fact that if $X_g= S^2\times S^1 \# \ldots \# S^2\times S^1$ is $g$ connected sums of $S^2\times S^1$, then $\mathcal{S}_A(X_g)$ is also one-dimensional, we thus get a map $\left<\cdot \right> : \mathcal{S}_A(X_g)\cong \mathbb{Z}[A,A^{-1}]$ sending a framed link $L\subset X_g$ to a polynomial $\left< L \right>$, normalized such that the class of the empty link is $1$.
\vspace{2mm}
\par To a closed oriented connected surface $\Sigma$, we associate the module $\mathcal{S}_A(\Sigma):=\mathcal{S}_A(\Sigma\times [0,1])$. If $\gamma\in \mathcal{MC}$ is a multicurve in the surface, we associate an element $[\gamma]\in \mathcal{S}_A(\Sigma)$ by embedding the curve in $\Sigma\times \{\frac{1}{2}\}\subset \Sigma\times [0,1]$ and associate the parallel framing. We easily show that the image of the set of multicurves forms a basis of $\mathcal{S}_A(\Sigma)$. 
\vspace{2mm}
\par We construct a product on $\mathcal{S}_A(\Sigma)$ turning it into an algebra as follows. If $L_1$ and $L_2$ are two embedded framed links in $\Sigma\times [0,1]$, isotope $L_1$ in $\Sigma\times [0,\frac{1}{2})$ and $L_2$ in $\Sigma\times(\frac{1}{2}, 1] $, then glue $\Sigma\times [0,\frac{1}{2}]$ to $\Sigma\times [\frac{1}{2}, 1]$  to obtain a link $L_1\bigcup L_2 \subset \Sigma \times [0,1]$. The product $[L_1]\cdot [L_2] $ is the class $[L_1\bigcup L_2] \in \mathcal{S}_A(\Sigma)$. 
\vspace{2mm}
\par For $z\in\mathbb{C}$, denote by $\mathcal{S}_{z}(\Sigma)$ the $\mathbb{C}$ algebra $\mathcal{S}_A(\Sigma)\otimes_{A=z}\mathbb{C}$. The relation between the skein algebras and the character variety is given by the following:
\begin{theorem}
\textbf{\textit{Bullock \cite{Bullock}}} The linear map $ \mathcal{S}_{-1}(\Sigma) \rightarrow \mathcal{O}_{\mathcal{M}(\Sigma)_{\mathbb{C}}}$ sending a multicurve $\gamma$ to the regular function $f_{\gamma}$ is an isomorphism of algebras.
\end{theorem}
 More precisely, Bullock proved the above results, assuming that the skein algebra $\mathcal{S}_{-1}(\Sigma)$ was reduced. This fact was latter proved independently in \cite{PS00} and \cite{ChaMa}.
\vspace{2mm}
\par Moreover, the skein algebra produces a star-product in the sense of Section $2$. Consider the element $-\exp\left(\frac{i\pi \hbar}{2}\right)\in \mathbb{C}[[\hbar]]$ and the algebra $\mathcal{S}_{-\exp\left(\frac{i\pi \hbar}{2}\right)}(\Sigma):=\mathcal{S}_A(\Sigma)\otimes_{A=-\exp\left(\frac{i\pi \hbar}{2}\right)} \mathbb{C}[[\hbar]]$. We have a vector space isomorphism
$\mathcal{O}_{\mathcal{M}(\Sigma)}[[\hbar]] \cong \mathcal{S}_{-\exp\left(\frac{i\pi \hbar}{2}\right)}(\Sigma)$ sending $f_{\gamma}$ to $[\gamma]$. We denote by $\star$ the pull-back product. 
Using Goldman's explicit formula for the Poisson bracket $\{f_{\gamma}, f_{\delta}\}$ associated to the Atiyah-Bott symplectic form, Turaev showed:
\begin{theorem} \textbf{\textit{Turaev \cite{Tu}}} We have: 
$$f_{\gamma}\star f_{\delta} = f_{\gamma}f_{\delta} +\circ(\hbar)$$
\par and:
$$  f_{\gamma} \star f_{\delta} - f_{\delta} \star f_{\gamma} = \frac{\hbar}{i}\{f_{\gamma}, f_{\delta}\} +\circ(\hbar^2).$$
\end{theorem}

\par The skein algebras thus produce a deformation quantization of the character variety as defined in Section $2$. Setting $A_N:=-\exp\left(\frac{i\pi}{2N}(N+1)\right)\in \mathbb{C}$, we define $\mathcal{S}_{A_N}(\Sigma):=\mathcal{S}_A(\Sigma)\otimes_{A=A_N}\mathbb{C}$.

\subsection{The \RT representations}

\par In order to get a quantum system, we now construct irreducible representations of $\mathcal{S}_{A_N}(\Sigma)$ following \cite{RT, BHMV2}. A genus $g$ handlebody $H_g$ is a $g$ connected sums of $D^2\times S^1$. Its boundary is a genus $g$ closed oriented surface $\Sigma_g$. By gluing two discs $D^2$ along their boundary with the identity map, we get a sphere $S^2$, thus gluing two handlebodies $H_g^1$ and $H_g^2$ along their boundary with the identity map, we get the manifold $X_g$ which is $g$ connected sums of $S^2\times S^1$. 
\vspace{2mm}\par
Denote by $F_g$ the set of isotopy classes of framed links in $H_g$. We construct a bilinear form $\left<\cdot , \cdot \right>: \mathbb{C}[F_g]\otimes \mathbb{C}[F_g]\rightarrow \mathbb{C}$ as follows. Let $L_1$ and $L_2$ be two framed links embedded in some handlebodies $H_g^1$ and $H_g^2$. By gluing $H_g^1$ and $H_g^2$ along their boundaries with the identity map, we get a link $L_1\bigcup L_2 \subset X_g$. Recall that the skein module of $X_g$ is one dimensional, thus the class $[L_1\bigcup L_2]\in \mathcal{S}_{A_p}(X_g)\cong \mathbb{C}$ is assimilated to a complex number. The pairing $\left<L_1, L_2\right>:=[L_1\bigcup L_2]\in\mathbb{C}$ extends sesquilinearly to give a Hermitian form $\left<\cdot , \cdot \right>: \mathbb{C}[F_g]\otimes \mathbb{C}[F_g]\rightarrow \mathbb{C}$. Since we need a non-degenerate pairing, we define:
$$ V_N(\Sigma_g):= \quotient{ \mathbb{C}[F_g]}{ker \left(\left< \cdot , \cdot \right> \right)}.$$
\par  This space is finite dimensional (see \cite{RT, BHMV2}). We now define a representation
$$ \pi_N: \mathcal{S}_{A_N}(\Sigma_g) \to \End(V_N(\Sigma_g)).$$
\par If $[L]\in V_N(\Sigma_g)$ is a vector given by a framed link in the handlebody $H_g$ and $\gamma\in \mathcal{C}$ is a multicurve in the surface $\Sigma_g$, by identifying the boundary of $H_g$ with $\Sigma_g$ we push $\gamma\in \partial H_g$ inside $H_g$ to get a framed link $\gamma\bigcup L\subset H_g$. We define $[\gamma]\cdot [L]:= [\gamma\bigcup L] $.The action of $\mathcal{S}_{A_N}(\Sigma)$ on $V_N(\Sigma)$ is irreducible (see e.g. \cite{GelcaUribe_SU2}). We thus get a quantization map: 
$$ \Opp_N : \mathcal{O}_{\mathcal{M}(\Sigma)_{\mathbb{C}}} \rightarrow \End(V_N(\Sigma)), \quad \Opp_N(f_{\gamma}):=\pi_N([\gamma]).$$
The following lemma is well-known to the experts though the author failed to find precise references. We postpone its proof to the end of this subsection:
\begin{lemma}\label{lemma_basic}
For our choice of $4N$-root of unity $A_N=-\exp(\frac{i\pi}{2N}(N+1))$, the Hermitian form $\left<\cdot, \cdot\right>$ on $V_N(\Sigma)$ is definite positive. Moreover, $\pi_N$ is a $*$ representation, i.e. the operators $\pi_N([\gamma])$ are self-adjoint.
\end{lemma}

\par Eventually the mapping class group naturally acts on the set of multicurves, inducing an action on the skein algebras and thus on $\End(V_N(\Sigma))$. Since $\pi_N$ is irreducible, this action is by inner automorphisms, thus is we have a (unique) projective representation:
$$ \rho_N: \Mod(\Sigma) \rightarrow \PU(V_N(\Sigma))$$
satisfying the exact Egorov identity. We now take a central extension $\emcg$ of the mapping class and lift the above projective representation to a linear one, the so-called \textit{\RT representation}:
$$\rho_N: \emcg \rightarrow \PU(V_N(\Sigma)).$$

\par As a conclusion, the system $\left( \mathcal{S}_{A_N}(\Sigma), V_N(\Sigma), \Opp_N, \rho_N\right)_{N\geq 3}$ is a quantization of the classical systems $\left(\mathcal{M}(\Sigma), \omega_{AB}, \Mod(\Sigma)\right)$ and $\left(\mathcal{M}(\Sigma), \omega_{AB}, \mathcal{K}(\Sigma)\right)$. Since these groups act ergodically on the character variety, the last ingredient needed to apply Theorem $\ref{main_th}$ is the following result:
\begin{theorem} \textbf{\textit{March\'e-Narrimanejad \cite{MN}}} For any multicurve $\gamma\in \mathcal{MC}$, we have:
\begin{equation}{\label{MN}}
 \frac{1}{\dim(V_N(\Sigma))} \tr \left( \Opp_N(f_{\gamma})\right) \xrightarrow[N\to \infty]{} \frac{1}{\Vol(\mathcal{M}(\Sigma))} \int_{\mathcal{M}} f_{\gamma} dV.
\end{equation}
\end{theorem}

\begin{proof}[Proof of Lemma \ref{lemma_basic}]
We first recall some notations and results of \cite{BHMV2} to which we refer for further details. Write $q:= A^2=-\exp(\frac{i\pi}{N})$, $[n]:= \frac{q^n - q^{-n}}{q-q^{-1}}$, $[n]! = [n][n-1]\ldots [2]$, $I_N:=\{0, \ldots, N-2\}$. A triple $(a,b,c)\in (I_N)^3$ is $N$-admissible if $|a-b|\leq c \leq a+b$ and $a+b+c$ is even and is smaller or equal to $2N-2$. For such a triple, set $i=(a+b-c)/2$, $j=(a+c-b)/2$, $k=(b+c-a)/2$ and write 
$$ \left<a,b,c \right>:= (-1)^{i+j+k} \frac{ [i+j+k+1]! [i]! [j]! [k]!}{[a]! [b]! [c]!}.$$
Let $(\gamma_e)_{e\in E}$ be a pants decomposition of $\Sigma$ (a maximal set of pairwise not homotopic pairwise not intersecting closed curves in $\Sigma$), so that the closures of the connected components of $\Sigma\setminus \cup_e\gamma_e$ are pair of pants $P_v$ (three holed spheres).  Let $\Gamma$ be its dual graph: its vertices $v\in V$ are in $1:1$ correspondence with the pair of pants $P_v$,  its edges $e\in E$ are in $1:1$ correspondence with the curves $\gamma_e$ and the endpoints of $e$ correspond to the pairs of pants adjacent to $\gamma_e$. 
Let $\col_N(\Gamma)$ be the set of maps $\sigma: E \to I_N$ such that if $\gamma_{e_1}\cup \gamma_{e_2}\cup \gamma_{e_3}$ is the boundary of a pair of pants $P_v$ (i.e. $e_1,e_2,e_3$ are adjacent to $v$) then $(\sigma(e_1), \sigma(e_2), \sigma(e_3))$ is $N$-admissible. In this case, we write $\left< \sigma(v)\right>:= \left< \sigma(e_1), \sigma(e_2), \sigma(e_3)\right>$.
To $\sigma \in \col_N(\Gamma)$, one can associate an element $u_{\sigma} \in V_N(\Sigma)$ such that $\{u_{\sigma}, \sigma\in \col_N(\Gamma)\}$ is a basis of $V_N(\Sigma)$. This basis satisfies two important properties: 
\begin{enumerate}
\item one has $\pi_N(\gamma_e) u_{\sigma}= \left< \sigma(e) \right> u_{\sigma} , \quad \mbox{ where } \left<n \right>:= (-1)^n [n+1]$; 
\item  $\{u_{\sigma}, \sigma\in \col_N(\Gamma)\}$ is an orthogonal basis of $(V_N(\Sigma), \left< \cdot, \cdot\right>)$ and 
$$ \left< u_{\sigma}, u_{\sigma} \right>= \frac{ \prod_{v \in V} \left<\sigma(v)\right>}{\prod_{e\in E} \left<\sigma(e)\right>}.$$
\end{enumerate}
To prove that $\left<\cdot , \cdot \right>$ is definite positive, we need to show that $ \left< u_{\sigma}, u_{\sigma} \right> >0$ for all $\sigma \in \col_N(\Gamma)$, so it suffices to proves the following two facts: 
\begin{enumerate}
\item[(i)] for all $n\in I_N$, then $\left<n\right> >0$ and 
\item[(ii)] for all $N$ admissible triple $(a,b,c)$ then $\left<a,b,c\right> >0$.
\end{enumerate}
$(i)$ For $0 \leq n \leq N-2$, then $\left< n \right>= (-1)^n [n+1]= \frac{\sin(i\pi/N)(n+1)}{\sin(i\pi/N)}>0$. $(ii)$ Let $\sgn: \mathbb{R}^* \to \mathbb{Z}/2\mathbb{Z}$ be the group morphism defined by $\sgn(x)\cong 0 \pmod{2}$ if and only if $x>0$. Since $[n]= (-1)^{n+1}\frac{\sin(n\pi/N)}{\sin(\pi/N)}$, we have $\sgn([n])\equiv n+1 \pmod{2}$ so $\sgn( [n]!)\equiv \lfloor \frac{n}{2} \rfloor \pmod{2}$. For $(a,b,c)$ an $N$-admissible triple and $(i,j,k)$ defined as before, we find
$$ \sgn(\left< a,b,c\right>) \equiv \lfloor \frac{i+j+k}{2} \rfloor + \lfloor \frac{i+j}{2} \rfloor+ \lfloor \frac{i+k}{2} \rfloor+ \lfloor \frac{k+j}{2} \rfloor+ \lfloor \frac{i}{2} \rfloor+ \lfloor \frac{j}{2} \rfloor+ \lfloor \frac{k}{2} \rfloor \equiv 0 \pmod{2}.$$
So $\left<a,b,c\right> >0$ and the pairing  $\left<\cdot , \cdot \right>$ is definite positive. It remains to prove that $\pi_N$ is a $*$ representation. Recall that $f_{\gamma}^*=f_{\gamma}$ so we need to show that $\pi_N([\gamma])$ is self-adjoint. This follows either from the definition of $\pi_N$ or from the observation that every non trivial curve $\gamma$ can be extended to a pants decomposition and that  in the orthogonal basis $\{u_{\sigma}, \sigma\in \col_N(\Gamma)\}$, the matrix of $\pi_N([\gamma_e])$ is diagonal with real diagonal entries.

\end{proof}

\subsection{Application to the spin decompositions}

\par We now prove Corollary $\ref{coro_spin}$.  We first briefly recall from \cite{BHMV2} the definition of the spin-decompositions for self-completeness of the paper.
  \vspace{3mm}
  \par Choose $N\geq 8$ such that $4$ divides $N$ and consider the group $G_N \subset \mcg$ generated by the $N$-th power of Dehn twists. In \cite{BHMV2} Proposition $7.5$ and Remark $7.6$, it is shown that the image $\rho_N (G_N)$ is isomorphic to the group $\mathcal{H}$ defined by:
   
  $$ \mathcal{H} := \left\{ \begin{array}{ll}
\mathrm{H}_1 \left( \Sigma_g; \mathbb{Z}/2\mathbb{Z}\right) &\mbox{, if }N \equiv 4 \pmod{8}; \\ 
\mathrm{H}_1 \left( \Sigma_g; \mathbb{Z}/2\mathbb{Z}\right)\rtimes \mathbb{Z}/2\mathbb{Z} &\mbox{, if }N \equiv 0 \pmod{8}. 
\end{array} \right.$$

\par The isomorphism sends $\rho_N(T_{\gamma})^N$ to the class $[\gamma]\in \mathcal{H}$. We thus obtain a decomposition:
$$ V_N(\Sigma) = \bigoplus_{\chi\in \Hom\left(\mathcal{H},\mathbb{Z}/2\mathbb{Z}\right)} V_N\left(\Sigma,\chi\right),$$
\par where: 
  $$ V_N(\Sigma, \chi ) := \left\{ v \in V_N(\Sigma) \mbox{, s.t. } \rho_N(T_{\gamma})^N v = \chi( [\gamma]) v \right\}. $$
  \par It follows from the exact Egorov identity  that $V_N(\Sigma_g, \chi= 0)$ is preserved by $\Mod(\Sigma)$ and that each $V_N(\Sigma, \chi)$ is preserved by $\mathcal{K}(\Sigma)$, since its elements act trivially on homology.
  \vspace{2mm}
  \par Note that, when $8$ divides $N$, then $\Hom\left(\mathcal{H},\mathbb{Z}/2\mathbb{Z}\right)$ is in bijection with the set of spin-structures of $\Sigma$, thus justifying the name \textit{spin-decomposition}. We refer also to \cite{Ma11} for a geometric interpretation of these decompositions.
  
  \vspace{2mm}
  \par To prove Corollary $\ref{coro_spin}$ we want to apply Corollary $\ref{coro_main_th}$. We thus need to control the dimensions of the spaces  $V_N(\Sigma, \chi)$. These dimensions where computed in \cite{BHMV2} from which we deduce the following lemma which permits to conclude:
  \vspace{2mm}
  \begin{lemma}\label{lemma_dim}
  We have the following equivalences:
  \begin{enumerate}
  \item $$ \dim\left(V_{4r}(\Sigma_g)\right) \underset{r\to \infty}{\sim} \Vol (\mathcal{M}(\Sigma_g)) (4r)^{3g-3};$$
  \item $$  \begin{array}{ll}\dim\left(V_{4r}(\Sigma_g, \chi)\right) \underset{r\to \infty}{\sim} 2^{-2g}\Vol (\mathcal{M}(\Sigma_g)) (4r)^{3g-3} &\mbox{, for all } \chi \in \mathcal{H}. \end{array}$$
  \end{enumerate}
  \end{lemma}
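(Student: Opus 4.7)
My plan is to read both asymptotic estimates off from the Verlinde formula for $\dim V_N(\Sigma_g)$ combined with the explicit refined dimension count for the spin subspaces from \cite{BHMV2}. Part (1) follows from a classical endpoint analysis of the Verlinde sum together with Witten's symplectic volume formula, while part (2) reduces to an equidistribution statement across the $2^{2g}$ spin sectors.

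For (1), I would begin with the classical Verlinde formula (assuming $g\geq 2$, the relevant case):
$$\dim V_N(\Sigma_g) = \left(\frac{N}{2}\right)^{g-1}\sum_{k=1}^{N-1}\left(\sin\frac{k\pi}{N}\right)^{2-2g}.$$
The sum is dominated by its endpoints $k$ near $0$ and $k$ near $N$, where $\sin(k\pi/N) \sim k\pi/N$ and the summand blows up as $(N/(k\pi))^{2g-2}$. Comparison with the convergent series $\sum_{k\geq 1} k^{-(2g-2)}=\zeta(2g-2)$ yields
$$\sum_{k=1}^{N-1}\left(\sin\frac{k\pi}{N}\right)^{2-2g} \underset{N\to\infty}{\sim} 2\zeta(2g-2)\left(\frac{N}{\pi}\right)^{2g-2},$$
and multiplying by the prefactor gives $\dim V_N(\Sigma_g) \sim 2^{2-g}\pi^{2-2g}\zeta(2g-2)\,N^{3g-3}$. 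Witten's symplectic volume formula $\Vol(\mathcal{M}(\Sigma_g))=2^{2-g}\pi^{2-2g}\zeta(2g-2)$ identifies this leading constant, and specializing to $N=4r$ gives the first asymptotic.

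For (2), I would invoke the explicit formulas of \cite{BHMV2} expressing each $\dim V_N(\Sigma_g, \chi)$ as a restricted Verlinde-type trigonometric sum whose summation indices are selected by parity or congruence conditions determined by $\chi$. In both residue classes $N\equiv 4,0 \pmod 8$ the set $\Hom(\mathcal{H},\mathbb{Z}/2\mathbb{Z})$ has cardinality $2^{2g}$ (in the $N\equiv 0 \pmod 8$ case this equals the number of spin structures on $\Sigma_g$, as noted earlier in the excerpt), and summing the $2^{2g}$ restricted sums recovers the full Verlinde sum. The proof then reduces to showing that each restricted sum contributes an asymptotic $2^{-2g}$ fraction of the total.

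The main obstacle is exactly this equidistribution: one must verify that the endpoint contributions responsible for the $\zeta(2g-2)$ factor split evenly among the $2^{2g}$ sectors. This holds because the congruence conditions defining $V_N(\Sigma_g,\chi)$ cut out a density-$2^{-2g}$ subset of the summation indices uniformly in $k$, hence in particular uniformly on the endpoint scale where the singularity lives; the non-trivial semidirect-product structure appearing when $N\equiv 0 \pmod 8$ complicates the underlying group-theoretic bookkeeping but leaves this density argument intact. Combined with the asymptotic from (1), this yields (2).
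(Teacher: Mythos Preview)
Your argument for (1) is correct but takes a different route from the paper. You extract the leading asymptotic directly from the Verlinde sum via an endpoint analysis and then identify the constant using Witten's volume formula $\Vol(\mathcal{M}(\Sigma_g))=2^{2-g}\pi^{2-2g}\zeta(2g-2)$. The paper instead observes (from Verlinde/\cite{BHMV2}) that $\dim V_N(\Sigma_g)$ is a polynomial in $N$ of degree $3g-3$ and then reads off the leading coefficient from the March\'e--Narimanejad trace asymptotic \eqref{MN}. Your approach is more hands-on and self-contained; the paper's is a one-line appeal to a limit already needed elsewhere in the argument.

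For (2) there is a gap, and it stems from a mischaracterization of what \cite{BHMV2} actually proves. You describe the relevant formulas as ``restricted Verlinde-type trigonometric sums whose summation indices are selected by parity or congruence conditions determined by $\chi$,'' and then propose an equidistribution argument on those indices. That is not what Theorems 7.10 and 7.16 of \cite{BHMV2} provide: they give \emph{closed-form} expressions
\[
\dim V_{4r}(\Sigma_g,\chi)=2^{-2g}\dim V_{4r}(\Sigma_g)+(\text{explicit terms of order }r^{g-1}),
\]
the correction depending on the parity of $r$ and, when $r$ is even, on the Arf invariant of $\chi$. Once these are quoted, (2) is immediate from (1) since $r^{g-1}=o(r^{3g-3})$; no equidistribution step is needed. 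Moreover, your sketch of equidistribution does not stand on its own: in the single-index Verlinde sum you used in (1) there is only one summation variable $k$, and the spin structure $\chi$ is not a congruence condition on that $k$. One can instead work with colorings of a trivalent graph where parity restrictions genuinely appear, but verifying that the singular endpoint contributions spread evenly over the $2^{2g}$ parity classes is precisely the content of the BHMV2 computation, not a density triviality. The clean fix is to quote the closed-form formulas directly, as the paper does.
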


  \begin{proof}
  The computation of $\dim(V_N(\Sigma_g))$ was first done by Verlinde in \cite{Verlinde} under the framework of the Wess-Zumino-Witten conformal field theory. An alternative elementary computation was done in \cite[Corollary $1.16$]{BHMV2}  with the use of conformal basis. In particular, it follows from these formulas that $\dim(V_N(\Sigma_g))$ is a polynomial in $N$ of degree $3g-3$. Using March\'e-Narimanejad theorem  of \cite{MN} (Equation $\eqref{MN}$) with $f$ the constant function equal to one, we deduce the first equivalence.
  \vspace{4mm}
  \par In \cite[Theorems $7.10$ and $7.16$]{BHMV2}, the authors computed $\dim\left(V_{4r}(\Sigma_g, \chi)\right)$. When $r$ is odd, they showed that
   $$\dim\left(V_{4r}(\Sigma_g, \chi\neq 0)\right) = 2^{-2g}\left( \dim(V_{4r}(\Sigma_g))-r^{g-1}\right)$$
  and $$\dim\left(V_{4r}(\Sigma_g, \chi= 0)\right) = \dim\left(V_{4r}(\Sigma_g, \chi\neq 0)\right) + r^{g-1}. $$

  \par Moreover when $r$ is even, denoting by $\epsilon \in \{0, 1\}$ the Arf invariant of $\chi$, they showed the formula:
  $$ \dim\left(V_{4r}(\Sigma_g, \chi)\right) = 2^{-2g} \left( \dim(V_{4r}(\Sigma_g))+ r^{g-1} \left( (-1)^{\epsilon} 2^g -1 \right) \right).$$

  \par In every cases, $\dim\left(V_{4r}(\Sigma_g, \chi)\right)$ is a polynomial in $r$ with leading term $2^{-2g}\Vol (\mathcal{M}(\Sigma_g)) (4r)^{3g-3}$. This concludes the proof.
  \end{proof}

\bibliographystyle{alpha}
\bibliography{biblio}

\end{document}